\theoremstyle{plain}
\newtheorem{theorem}{Theorem}
\newtheorem{lemma}[theorem]{Lemma}
\newtheorem{proposition}[theorem]{Proposition}
\newtheorem{example}{Example}
\theoremstyle{plain}
\newtheorem{definition}{Definition}
\theoremstyle{remark}
\newcommand{\K}{\mathcal{K}}
\newcommand{\R}{\mathbb{R}}
\newcommand{\Rn}{\mathbb{R}^{n}}
\renewcommand{\P}{\ensuremath{\mathbb{P}^{n}}}
\journal{Mathematical Social Sciences}
\begin{document}

\begin{frontmatter}

%% Title, authors and addresses

%% use the tnoteref command within \title for footnotes;
%% use the tnotetext command for theassociated footnote;
%% use the fnref command within \author or \affiliation for footnotes;
%% use the fntext command for theassociated footnote;
%% use the corref command within \author for corresponding author footnotes;
%% use the cortext command for theassociated footnote;
%% use the ead command for the email address,
%% and the form \ead[url] for the home page:
%% \title{Title\tnoteref{label1}}
%% \tnotetext[label1]{}
%% \author{Name\corref{cor1}\fnref{label2}}
%% \ead{email address}
%% \ead[url]{home page}
%% \fntext[label2]{}
%% \cortext[cor1]{}
%% \affiliation{organization={},
%%             addressline={},
%%             city={},
%%             postcode={},
%%             state={},
%%             country={}}
%% \fntext[label3]{}

\title{A New Value for Cooperative Games on Intersection-Closed Systems}

%% use optional labels to link authors explicitly to addresses:
%% \author[label1,label2]{}
%% \affiliation[label1]{organization={},
%%             addressline={},
%%             city={},
%%             postcode={},
%%             state={},
%%             country={}}
%%
%% \affiliation[label2]{organization={},
%%             addressline={},
%%             city={},
%%             postcode={},
%%             state={},
%%             country={}}

\author{Martin \v{C}ern\'{y}} %% Author name

%% Author affiliation
\affiliation{organization={Department of Applied Mathematics, Charles University},%Department and Organization
            %addressline={}, 
            city={Prague},
            %postcode={}, 
            %state={},
            country={Czech Republic}}

%% Abstract
\begin{abstract}
%% Text of abstract
We introduce a new allocation rule, the \emph{uniform-dividend value} (UD-value), for cooperative games whose characteristic function is incomplete. The UD-value assigns payoffs by distributing the total surplus of each family of \emph{indistinguishable} coalitions uniformly among them. Our primary focus is on set systems that are \emph{intersection-closed}, for which we show the UD-value is uniquely determined and can be interpreted as the expected Shapley value over all positive (i.e., nonnegative-surplus) extensions of the incomplete game.

We compare the UD-value to two existing allocation rules for intersection-closed games: the R-value, defined as the Shapley value of a game that sets surplus of absent coalition values to zero, and the IC-value, tailored specifically for intersection-closed systems. We provide axiomatic characterizations of the UD-value motivated by characterizations of the IC-value and discuss further properties such as fairness and balanced contributions.  Further, our experiments suggest that the UD-value and the R-value typically lie closer to each other than either does to the IC-value.

Beyond intersection-closed systems, we find that while the UD-value is not always unique, a surprisingly large fraction of non-intersection-closed set systems still yield a unique UD-value, making it a practical choice in broader scenarios of incomplete cooperative games.
\end{abstract}

%%Graphical abstract
%\begin{graphicalabstract}
%\includegraphics{grabs}
%\end{graphicalabstract}

%%Research highlights
%\begin{highlights}
%\item Research highlight 1
%\item Research highlight 2
%\end{highlights}

%% Keywords
\begin{keyword}
Cooperative games \sep
Incomplete information \sep
Intersection-closed set systems \sep
UD-value \sep
Shapley value

% \PACS 02.50.Le \sep 02.10.Ox
% \MSC 91A12 \sep 91A13
\JEL C71 \sep D81

%% keywords here, in the form: keyword \sep keyword

%% PACS codes here, in the form: \PACS code \sep code

%% MSC codes here, in the form: \MSC code \sep code
%% or \MSC[2008] code \sep code (2000 is the default)

\end{keyword}

\end{frontmatter}

%% Add \usepackage{lineno} before \begin{document} and uncomment 
%% following line to enable line numbers
%% \linenumbers

%% main text
%%

\section{Introduction}
In a cooperative game of $n$ players, worth of cooperation of any subset of agents, a coalition $S$, is represented by a real value $v(S)$. The \emph{Shapley value} $\phi$ is a rule, which allocates the value of coalition of all agents, $v(N)$, among the agents in such a way that it reflects how the agents contribute to different coalitions. For each coalition, its \emph{surplus} $d_v(S)$ is computed and redistributed equally among its participants, i.e. each agent $i$ receives
\[
\phi_i(v) = \sum_{S \subseteq N:\ i \in S}\frac{d_v(S)}{|S|}.
\]
When not all of the coalitions are possible to form, a cooperative game is enhanced with a list of feasible coalitions $\mathcal{F}$ and at the same time, values of coalitions outside $\mathcal{F}$ are not provided as they are, due to their nature, non-existent or meaningless. 
In literature, these games are referred to as \emph{games with restricted cooperation} or \emph{cooperative games on set systems} (\cite{Myerson1977, Brink2017}). Computing the Shapley value in this modified setting rises a problem; a surplus of a coalition depends on the values of all of its subcoalitons, which might not lie in $\mathcal{F}$. A way around this problem was introduced in~\cite{Calvo2015}, where the surplus of unfeasible coalitions was set to zero. From the relation between surpluses of different coalitions, the rest of the surpluses can be uniquely determined and the Shapley value computed.

Another model where some of the coalition values are not provided are \emph{incomplete cooperative games} or \emph{partially defined cooperative games}~\cite{Masuya2016, Cerny2024}. Formally, the model is equivalent to games on set systems; a list of \emph{known coalitions} $\mathcal{K}$ is provided and the values for coalitions outside $\mathcal{K}$ are unknown. Although both models are formally equivalent, application of the \emph{R-value}, introduced above, is not always suitable under incomplete information as can be seen on the following example. Let $N = \{1,2,3\}$ be the set of players and the values being known for all coalitions but $\{1,2\}$:
\begin{itemize}
    \item $v(\{1\}) = v(\{2\}) = v(\{3\}) = 1$,
    \item $v(\{1,3\}) = v(\{2,3\}) = 3$,
    \item $v(\{1,2,3\}) = 7$.
\end{itemize}
For all coalitions, except $\{1,2\}$ and $\{1,2,3\}$, the surplus in uniquely defined. Under the R-value the surplus of $\{1,2\}$ is assumed to be zero, which results in division $(2.5,2.5,3)$. If the game was modeling a situation where the collaboration of players $1$ and $2$ was infeasible and at the same time, player $3$ can collaborate with either of the players, it seems reasonable to reward player $3$ with a higher payoff as is done under the R-value.

This division is less justified in situations, where collaboration of $\{1,2\}$ is feasible but $v(\{1,2\})$ is unknown. Under the R-value, it is implicitly assumed that there is no surplus of coalition $\{1,2\}$. THis means that, under the assumption of non-negative surpluses, the R-value corresponds to the most pessimistic scenario. 

One might argue, following the pattern of the game, that $v(\{1,2\}) = 3$ as is the case for other coalitions of size 2. Without additional assumptions, all we know about the surplus of $\{1,2\}$ is that it contributes to $v(\{1,2,3\})$ just as much as the surplus of  $\{1,2,3\}$ does. It is thus more fair to assume that both surpluses are equal.

This idea leads to the definition of our \emph{uniform-dividend value (UD-value)}. In the following section, we formally introduce this value and analyze when it is uniquely defined. Interestingly, uniqueness occurs when the set of known coalitions $\mathcal{K}$ is intersection-closed. For such systems, Béal et al.~\cite{Beal2020} recently introduced a value, which we refer to as the \emph{IC-value}. We also provide empirical analysis of uniqueness of the UD-value for set systems, which are not intersection-closed, however, our analysis restricts to intersection-closed set systems for the rest of this work.

Section~\ref{sec:characterization} is dedicated to characterizing the UD-value in terms of incomplete cooperative games. To illustrate, recall the example above. All the known surpluses are non-negative, and if we assume that the surpluses for $\{1,2\}$ and $\{1,2,3\}$ are also non-negative, then any assignment of values to these surpluses leads to what is called a \emph{positive extension} of the incomplete game. The expected Shapley value over all these possible extensions is exactly the UD-value. In Section~\ref{sec:characterization}, we formalize this intuition and highlight the correspondence between the UD-value and the \emph{average value} introduced in~\cite{Bok2023}.

We compare the UD-value with the IC-value and the R-value in Section~\ref{sec:comparison}. We show a common framework for defining all these values and show how it can be used to analyse them in the scope of incomplete cooperative games. We further modify two known axiomatizations of the IC-value provided in~\cite{Beal2020} to characterize both the UD-value and the R-value and illustrate difference between the values by providing examples of incomplete games, which violate axioms of three known characterizations of the R-value. We conclude the section with experimental analysis of the difference of the three values and their distance from uniform allocation rule.

\section{UD-value}\label{sec:ud-value}

%\begin{itemize}
%    \item hra
%    \item shapley value
%    \item unanimity game
%    \item positive game
%\end{itemize}
%Let us start with some preliminary definitions. A \emph{(complete) cooperative game}~\cite{Peleg2007} $(N,v)$ is given by set of players $N = \{1,\dots,n\}$ and a characteristic function $v \colon 2^N \to \R$, where $v(\emptyset)=0$. A \emph{unanimity game} $(N,u_S)$ for $\emptyset \neq S \subseteq S$ is defined as $u_S(T) = 1$ if $S \subseteq T$ and $u_S(T)=0$, otherwise. 

Recall a \emph{(complete) cooperative game} $(N,v)$, where $N$ is the player set and $v \colon 2^N \to \R$ with $v(\emptyset) = 0$ is the characteristic function. 
We call $S \subseteq N$ a \emph{coalition} and $v(S) \in \R$ represents the \emph{value of coalition $S$}.
%We call $x \in \Rn$ the \emph{payoff vector}, and denote $x(S) = \sum_{i \in S}x_i$. 
The \emph{Shapley value} $\phi(v) \in \Rn$ of a cooperative game $(N,v)$ is defined as $\phi_i(v) = \sum_{S \subseteq N, i \in S} \frac{d_v(S)}{|S|}$, where $d_v(S)$ is the \emph{surplus of coalition $S$ in $(N,v)$} and is defined recursively as $d_v(S) = v(S) - \sum_{T \subsetneq S}d_v(T)$ with $d_v(\{i\}) = v(\{i\})$. 
When $d_v(S) \geq 0$ for every coalition $S$, we say $(N,v)$ is \emph{positive} and denote the set of all positive games on $n$ players by $\mathbb{P}^n$.

We write $v$ instead of $(N,v)$ when $N$ is known from the context. For the sake of brevity, we write $i$ instead of $\{i\}$, and %$i,j$ instead of $\{i,j\}$, etc.
by $n,s,t$, we denote the sizes of coalitions $N$, $S$, and $T$, respectively.

%Mozna budu potrebovat i superadditive a convex?

\begin{definition}\label{def:incomplete-game}
    An \emph{incomplete cooperative game} $(N,\K,v)$ is given by a set of agents $N = \{1,\dots,n\}$, a set of known coalitions $\mathcal{K} \subseteq 2^N$ and a characteristic function $v \colon 2^N \to\R$. Further, $\emptyset \in \K$ and $v(\emptyset)=0$.
\end{definition}
When $\K = 2^N$, an incomplete cooperative game $(N,\K,v)$ coincides with a complete cooperative game $(N,v)$. In Definition~\ref{def:incomplete-game}, $\K$ serves as a “mask” over $v$, revealing only those coalition values that are known. Alternatively, one can define incomplete game as $v \colon \K \to \mathbb{R}$ (see e.g.~\cite{Masuya2016, Bok2023}). Although in many situations both viewpoints are equivalent, it can sometimes be advantageous to switch between them.

In example provided in the introduction, coalitions $\{1,2\}$ and $\{1,2,3\}$ were \emph{indistinguishable}, because the sum of their surpluses was affecting the same known value, $v(N)$. For a more general structure of known coalitions $\K$, a coalition $T$ might affect more than one value. From the definition of surplus, $T$ affects the value of $S$, if $T$ is a subset of $S$. Therefore, two coalitions are in general \emph{indistinguishable}, if they are subsets of the same system of known coalitions. Formally, we employ the \emph{closure of coalition $T$ in a set system $\K$} defined as
$$c_\K(T) = \bigcap_{S \in \K, T \subseteq S}S.$$
Every $X,Y \subseteq N$ such that $c_\K(X) = c_\K(Y)$ correspond to indistinguishable coalitions given by our motivation. We denote $\mathcal{C}(\K) = \{c_\K(S) \mid S \subseteq N\}$ and $\mathcal{C}(S) = \{ T \subseteq N \mid c_\K(T) = c_\K(S)\}$. Function
$c_\K \colon 2^N \to 2^N$ is a \emph{closure operator on $N$}, for which there are classical known results.
\begin{lemma}\cite{Davey2002}\label{lem:operator}
    Let $\K \subseteq 2^N$ be a set of known coalitions. Then it holds
    \begin{enumerate}
        \item $T \subseteq S \implies c_\K(T) \subseteq c_\K(S)$,
        \item $S \in \K \implies c_\K(S) = S$.
    \end{enumerate}
\end{lemma}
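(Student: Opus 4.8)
The plan is to verify both statements directly from the definition $c_\K(T) = \bigcap_{S' \in \K,\ T \subseteq S'} S'$, using nothing beyond elementary monotonicity of set intersection. There is essentially no machinery to set up; the work is simply to keep track of which subfamily of $\K$ indexes each intersection.

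For the first claim, I would argue as follows. Suppose $T \subseteq S$. Then every $S' \in \K$ with $S \subseteq S'$ also satisfies $T \subseteq S'$ by transitivity of inclusion, so the family $\{S' \in \K : S \subseteq S'\}$ is contained in $\{S' \in \K : T \subseteq S'\}$. Intersecting over a smaller family produces a larger (or equal) set, hence $c_\K(T) = \bigcap_{S' \in \K,\ T \subseteq S'} S' \subseteq \bigcap_{S' \in \K,\ S \subseteq S'} S' = c_\K(S)$. If one wishes to be careful about the degenerate situation where no member of $\K$ contains $T$, one adopts the usual convention that an empty intersection over subsets of $N$ equals $N$; the inclusion above then still holds trivially.

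For the second claim, assume $S \in \K$. Then $S$ is itself a member of the family $\{S' \in \K : S \subseteq S'\}$ (take $S' = S$), so the intersection defining $c_\K(S)$ includes $S$ as one of its terms, giving $c_\K(S) \subseteq S$. Conversely, every $S'$ in that family contains $S$ by the defining condition, so $S$ is contained in their intersection, i.e.\ $S \subseteq c_\K(S)$. Combining the two inclusions yields $c_\K(S) = S$.

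I do not anticipate any real obstacle here: both parts are one-line consequences of the behaviour of intersection under shrinking the index set and of the fact that including $S$ among the intersected terms forces the intersection to lie below $S$. The only point deserving a word of care is the well-definedness of the intersection when the relevant subfamily of $\K$ is empty, which is dispatched by the empty-intersection convention.
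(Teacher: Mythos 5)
Your proof is correct. The paper states this lemma as a classical fact about closure operators and cites Davey--Priestley without giving a proof, and your direct verification from the definition (antitonicity of intersection in the index family for part 1, and the two inclusions $c_\K(S) \subseteq S$ and $S \subseteq c_\K(S)$ for part 2) is exactly the standard argument; your remark on the empty-intersection convention is a sensible extra precaution.
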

%We write $c(S)$ instead of $c_\K(S)$ whenever $\K$ is clear from the context.

The key idea behind defining our value is to set the surpluses of indistinguishable coalitions to be equal. Since the surpluses are in some sources referred to as dividends, we call it the \emph{uniform-dividend value}.

\begin{definition}\label{def:UD-value}
    The \emph{uniform-dividend value (UD-value)} $\Phi^\K(v)$ of an incomplete game $(N,\K,v)$ is defined for $i \in N$ as
    $$\Phi_i^\K(v) = \sum_{S \subseteq N, i \in S}\frac{\delta_v^\K(S)}{|S|},$$
    where $\delta_v^\K(S)$ for $S \subseteq N$ are given by the following conditions:
    \begin{enumerate}
        \item $\sum_{T \subseteq S}\delta_v^\K(T) = v(S)$ for $S \in \K$,
        \item $\delta_v^\K(S) = \delta_v^\K(T)$ if $c_\K(S) = c_\K(T)$ for every $S,T \subseteq N$.
    \end{enumerate}
\end{definition}

To compute values $\delta_v^\mathcal{K}$, one might solve the given system of linear equations. It is not immediately clear, though, whether these systems define a unique solution. When $\K = 2^N$, there are no type 2.\ equations, and type 1.\ equations now correspond to conditions on coalition surpluses, thus $\delta_v^\mathcal{K}(S) = d_v(S)$ for every coalition $S$. Conversely, when $\mathcal{K} = \{\emptyset, N\}$, there is only one type 1.\ equation, while all nonempty subsets $S$ and $T$ satisfy $c_\K(S) = c_\K(T)$. As the size of the set $\mathcal{K}$ increases, the number of type 1.\ equations grows, while the number of type 2.\ equations diminishes. An important case, when the system of linear equations comes into balance is when $\K$ is \emph{intersection-closed}.

\begin{definition}
    An incomplete cooperative game $(N,\K,v)$ is \emph{intersection-closed} if
    $\forall S,T \in \K \text{ it holds } S \cap T \in \K$.
\end{definition}

The reason behind why the UD-value is uniquely defined for intersection-closed games stems from the fact that $\mathcal{C}(\K) = \K$, which means for every $T \subseteq N$, $c_\K(T) \in \K$.

\begin{proposition}\label{prop:ic_implies_unique}
    The UD-value is unique if $(N,\K,v)$ is intersection-closed game with $N \in \K$.
\end{proposition}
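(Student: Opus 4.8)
The plan is to show that the linear system in Definition~\ref{def:UD-value} has a unique solution $\delta_v^\K$ when $\K$ is intersection-closed and $N \in \K$. The key structural fact, noted just before the statement, is that $c_\K(T) \in \K$ for every $T \subseteq N$; equivalently $\mathcal{C}(\K) = \K$. So every equivalence class $\mathcal{C}(S)$ contains a unique element of $\K$, namely $c_\K(S)$, and the type~2 conditions let us reparametrize: instead of one unknown $\delta_v^\K(S)$ per subset $S \subseteq N$, we have one unknown per class, and it is natural to index these unknowns by the representatives in $\K$. Write $e(R) := \delta_v^\K(S)$ for the common value on the class with representative $R = c_\K(S) \in \K$.

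Next I would rewrite the type~1 equations in these new unknowns. For $S \in \K$, the left-hand side $\sum_{T \subseteq S} \delta_v^\K(T)$ groups the subsets $T$ of $S$ by their closure $c_\K(T)$; since $T \subseteq S$ and $S \in \K$ we have $c_\K(T) \subseteq c_\K(S) = S$, so every closure that appears is itself a member of $\K$ contained in $S$. Hence the type~1 equation for $S$ becomes
\[
\sum_{R \in \K,\, R \subseteq S} \alpha(R,S)\, e(R) = v(S),
\]
where $\alpha(R,S) = \#\{T \subseteq N : T \subseteq S,\ c_\K(T) = R\}$ counts the subsets of $S$ whose closure is exactly $R$. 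Crucially $\alpha(R,S) \geq 1$ whenever $R \in \K$ and $R \subseteq S$, because $R$ itself is such a $T$ (using $c_\K(R) = R$ from Lemma~\ref{lem:operator}), and $\alpha(S,S) \geq 1$ in particular. Since $N \in \K$, there is one type~1 equation for each $R \in \K$, so the system is square: $|\K|$ equations in the $|\K|$ unknowns $\{e(R)\}_{R \in \K}$.

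To finish, order $\K$ by any linear extension of $\subseteq$; then the coefficient matrix $(\alpha(R,S))_{S,R \in \K}$ is triangular with respect to this order — $\alpha(R,S) = 0$ unless $R \subseteq S$ — and its diagonal entries $\alpha(S,S)$ are nonzero (indeed positive integers). A triangular matrix with nonzero diagonal is invertible, so the $e(R)$ are uniquely determined, hence so is $\delta_v^\K$ on every subset of $N$, and therefore $\Phi^\K(v)$ is well-defined. I expect the main obstacle to be the bookkeeping in the reparametrization step: one must check carefully that grouping $\sum_{T \subseteq S}\delta_v^\K(T)$ by closures is legitimate (that all relevant closures land in $\K$ and are $\subseteq S$), which is exactly where intersection-closedness and $N \in \K$ are used, and that no class is double-counted or omitted. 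Everything after that is the standard fact that triangular systems with nonzero diagonal are uniquely solvable.
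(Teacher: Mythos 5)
Your proposal is correct and follows essentially the same route as the paper: rewrite each type~1 equation in terms of the class representatives in $\K$ (using $c_\K(T)\subseteq c_\K(S)=S$ and $\mathcal{C}(\K)=\K$), then observe the resulting square system is triangular with respect to inclusion with nonzero diagonal. The paper phrases this last step as iterative elimination starting from inclusion-minimal members of $\K$, which is the same argument in different words.
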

\begin{proof}
    For $S \in \K$, consider the condition
    \begin{equation}\label{eq:1}
        \sum_{T \subseteq S}\delta_v^\K(T) = v(S).
    \end{equation}
    In~\eqref{eq:1}, one can substitute each $\delta_v^\K(T)$ with $\delta_v^\K(c_\K(T))$, since these are, according to Definition~\ref{def:UD-value}, equal. Since $T \subseteq S$, then $c_\K(T) \subseteq c_\K(S) = S$, where the equality follows from the fact that $c_\K(S) \in \K$. One can thus rewrite Eq.~\eqref{eq:1} as
    \begin{equation}
        \sum_{T \in \K, T \subseteq S}\alpha_T \delta_v^\K(T) = v(S)
    \end{equation}
    where $\alpha_T = |\mathcal{C}(T)|$, because $X \in \mathcal{C}(T) \implies X \subseteq T$. 

    Now assume $S \in \K$ is inclusion-minimal set in $\K$. From our previous analysis, it follows
    \begin{equation}
        \sum_{T \subseteq S}\delta_v^\K(T) = |\mathcal{C}(S)|\delta_v^\K(S) = v(S),
    \end{equation}
    thus $\delta_v^\K(S)$ is uniquely defined. 
    
    The rest of the values can be iteratively determined by considering the condition for the inclusion-minimal coalition $S \in \K$, for which $\delta_v^\K(S)$ is not yet determined. This is because it holds
    \begin{equation}\label{eq:3}
        \sum_{T \subseteq S}\delta_v^\K(T) = \sum_{T \in \K, T \subsetneq S}|\mathcal{C}(T)|\delta_v^\K(T) + |\mathcal{C}(S)|\delta_v^\K(S) = v(S).
    \end{equation}
    Since $S$ is inclusion-minimal coalition for which $\delta_v^\K(S)$ is not yet determined, it can be derived from Eq.~\eqref{eq:3}.

\end{proof}
When the game is not intersection-closed, type 1.\ conditions from Definition~\ref{def:UD-value} can be rewritten, similarly to the proof of Proposition~\ref{prop:ic_implies_unique}, as
\begin{equation}\label{eq:2}
    \sum_{T \in \mathcal{C}(\K), T \subseteq S}\delta_v^\K(T) = v(S),\hspace{2ex} S \in \K.
\end{equation}
These constitute a system of $|\K|$ equations and $|\mathcal{C}(\K)|$ variables. For games, which are not intersection-closed, $|\mathcal{C}(\K)| > |\K|$, thus the values $\delta_v^\K$ form a non-trivial affine space. This does not immediately imply that the UD-value is not unique, as one could potentially get the same Shapley value for different values $\delta_v^\K$. We performed a numerical analysis, which not only shows that this equality of Shapley values for different dividends happen, but as $n$ grows larger, the proportion of set systems, which are not intersection-closed but yield a unique UD-value grows. In our analysis, we assume $N \in \K$, as when this is not the case, the UD-value is never unique and can be arbitrarily large for every player. Our analysis revealed that for $n=2$ and $n=3$, the UD-value is unique only when the game is intersection-closed. For $n=4$, we found 15 set systems, which are not intersection-closed, and still, for every feasible system of $\delta_v^\K$, the UD-value is the same for every incomplete game. These set systems follow a very strict and very symmetrical structure $\K = \{S \subseteq N \mid |S| = 2\} \cup \{\emptyset, N\} \cup \mathcal{S}$ where $\mathcal{S}$ is a nonempty subset of singleton coalition.
For $n=5$ and $n=6$, even less symmetric set systems can be found. Because the number of different set systems is double exponential in $n$, we approximate the proportion of systems with unique UD-value by sampling random set systems and checking the uniqueness on this sample. To determine the sample size, we use the sample size formula~\cite{Yamane1967},
\begin{equation}
    n = \frac{Z^2\cdot p \cdot (1-p)}{E^2}
\end{equation}
where $n$ is the sample size, $Z$ is the Z-score corresponding to the desired confidence level, $p$ corresponding to the estimated proportion of the population and $E$ corresponding to the margin of error. We set $Z = 2.576$ (which corresponds to 99\% confidence level), $p = 0.5$ (since this is the most reserved estimate and yields the highest $n$) and $E = 0.001$, which results in $n=1,658,944$. At the end, we selected a larger sample size of $n= 2,000,000$. The resulting estimated proportions can be found in Table~\ref{tab:set_systems}.
The experiments shows that the number of intersection-closed set systems decrease rapidly, as for 4 players, the number is $2,271$, which corresponds roughly to $14\%$, while for 5 players the proportion is $0.1\%$ and for 6 players, among $2,000,000$ randomly sampled set systems, there was no intersection-closed one. The proportion of set systems, which are not intersection-closed, but yield unique UD-value grows with $n$; for 5 players, it is roughly $23\%$ and for 6 players, it is even around $72\%$. This shows that for $n$ large enough, the UD-value is suitable for increasingly more and more set systems.
\begin{table}[ht!]
\centering
\begin{tabular}{@{}lcccc@{}}
\toprule
\textbf{Set Systems} & \textbf{\(|N| = 3\)} & \textbf{\(|N| = 4\)} & \textbf{\(|N| = 5\)} & \textbf{\(|N| = 6\)} \\ \midrule
Containing $\emptyset, N$ & 64 & 16,384 & 1,073,741,824 & $\sim 8.507 \cdot 10^{37}$ \\
I-C systems & 0.70313 & $0.13861$ & $\sim1.28168 \cdot 10^{-3}$ & $\sim 0.0$ \\
Unique UD & 0 & $9.15527 \cdot 10^{-4}$ & $\sim0.22874$ & $\sim0.72484$ \\ \bottomrule
\end{tabular}
\caption{Comparison of set systems for ground set sizes \( |N| = 3, 4, 5, 6 \).}
\label{tab:set_systems}
\end{table}

Despite these results concerning uniqueness of the UD-value, we restrict our analysis in the rest of this text only to intersection-closed set systems with $N \in \K$; strong theoretical properties allow for stronger results, we can compare the UD-value with the IC-value by Beál et al. and for $n=3$, $4$, intersection-closed set systems cover most scenarios where the UD-value is unique.

% \begin{example}
% Let $(N,\K,v)$ be given by $N = \{1,2,3\}$ and $\K = 2^N \setminus \{1,2\}$ and further $v(S) = 0$ for every $S \in \K$, $S \neq N$ and $v(N) = 1$.

% The R-value splits $1$ equally between the agents, however, the UD-value is assigned in the following way:
% \begin{itemize}
% \item $\Phi_1(v,\K) = \frac{5}{12}$,
% \item $\Phi_2(v,\K) = \frac{5}{12}$,
% \item $\Phi_3(v,\K) = \frac{2}{12} = \frac{1}{6}$.
% \end{itemize}
% The emphasis of 
% \end{example}
% \begin{proposition}
% If $(N,\K,v)$ is not intersection-closed, the R-value might not be among UD-values.
% \end{proposition}
% \begin{proof}
% I found an example for $N = \{1,2,3\}$ and $\K = \{\emptyset, \{1,2\}, \{2,3\}, \{1,2,3\}\}$. The R-value of this game should correspond to
% \begin{itemize}
% \item $R_1 = \frac{v(12)}{2} + \frac{v(123)}{3}$,
% \item $R_2 = \frac{v(12)+v(23)}{2} + \frac{v(123)}{3}$,
% \item $R_3 = \frac{v(23)}{2} + \frac{v(123)}{3}$.
% \end{itemize}
% The UD-value of this game should correspond to
% \begin{itemize}
% \item $\Phi_1 = \frac{3}{2} \delta_1 + \frac{5}{6} \delta_N$,
% \item $\Phi_2 = \delta_2 + \frac{\delta_1 + \delta_3}{2} + \frac{\delta_N}{3}$,
% \item $\Phi_3 = -\frac{3}{2} \delta_3 + \frac{5}{6} \delta_N$.
% \end{itemize}
% Now if I try to find $\delta$'s by solving a system of linear equations with $R_i = \Phi_i$, the system is unsolvable for some values of $v(12),v(23),v(123)$.
% \end{proof}

\section{Characterization of the UD-value}\label{sec:characterization}
When we assume that a complete game must satisfy positivity (i.e., non-negative surpluses), the known values of certain coalitions impose constraints on the unknown values. For example, if $S$ is a subset of $T$, then $v(S)$ forms a lower bound for $v(T)$ and vice versa. By combining the known values of multiple coalitions, one can derive even tighter bounds for other (unknown) coalitions. Moreover, certain selections of values for the unknown coalitions may conflict with the positivity requirement if they fail to respect these bounds.

All of this information—both about feasible ranges of values and about which combinations of unknown values are disallowed—can be compactly represented by the set of all extensions. An extension is simply a complete game that agrees with the known values and remains consistent with positivity. Thus, identifying the set of all extensions precisely captures everything we can infer about the unknown values under these conditions.
\begin{definition}
    A cooperative game $(N,w)$ is a \emph{$\P$-extension} of incomplete cooperative game $(N,\K,v)$ if $(N,w) \in \P$ and for every $S\in \K$,
    \begin{equation}
        w(S)=v(S).
    \end{equation}
    Further, $(N,\K,v)$ is \emph{\P-extendable}, if it has a \P-extension.
\end{definition}
Without additional assumptions on the incomplete game, every \(\P\)-extension is equally likely to represent the actual underlying game. Hence, if our goal is to predict the Shapley value of that underlying game but we only know partial information, a natural approach is to average the Shapley values across all possible \(\P\)-extensions. This expected Shapley value is then our best guess, given the incomplete data. Our main result shows that this expected payoff vector coincides with the UD-value. In other words, the UD-value for an incomplete game can be interpreted precisely as the average over all Shapley values of its \(\P\)-extensions.

\begin{theorem}\label{thm:char}
For \P-extendable intersection-closed game $(N,\K,v)$, it holds
\begin{equation}
    \Phi^\K(v) = \mathbb{E}_{w \sim \P(v)} \left[\phi(w)\right].
\end{equation}
\end{theorem}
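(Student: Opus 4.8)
The plan is to realize the set of $\P$-extensions as a bounded polytope in surplus-coordinates and to exploit a combinatorial symmetry of that polytope. A complete game $(N,w)$ is determined by its surplus vector $(d_w(S))_{\emptyset\neq S\subseteq N}$, and $(N,w)$ is a $\P$-extension of $(N,\K,v)$ precisely when $d_w(S)\ge 0$ for all $S$ and $\sum_{T\subseteq S}d_w(T)=v(S)$ for all $S\in\K$; call the set of such vectors $P$. Since $N\in\K$, the constraint $\sum_{T}d_w(T)=v(N)$ together with nonnegativity confines each $d_w(S)$ to the interval $[0,v(N)]$ (note $v(N)\ge 0$ once $P\neq\emptyset$), so $P$ is a bounded polytope, and $\P$-extendability gives $P\neq\emptyset$; hence the uniform distribution $\mu$ on $P$, taken relative to Lebesgue measure on the affine hull of $P$, is well defined. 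The passage between the surplus vector and the value vector $(w(S))_{S\neq\emptyset}$ is a unitriangular linear change of coordinates (the zeta/Möbius transform of the subset lattice), hence of determinant $1$, so $\mu$ coincides with the uniform distribution $\P(v)$ over $\P$-extensions in value-coordinates. Writing $m(S)=\mathbb{E}_{w\sim\mu}[d_w(S)]$, linearity of expectation gives $\mathbb{E}_{w\sim\mu}[\phi_i(w)]=\sum_{S\ni i}m(S)/|S|$, so it suffices to prove $m=\delta_v^\K$.

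I would establish this by checking that $m$ satisfies both defining conditions of Definition~\ref{def:UD-value} and then invoking the uniqueness proved in Proposition~\ref{prop:ic_implies_unique}. Condition~1 is immediate from linearity of expectation and the equality constraints of $P$: for $S\in\K$, $\sum_{T\subseteq S}m(T)=\mathbb{E}_{w\sim\mu}\bigl[\sum_{T\subseteq S}d_w(T)\bigr]=v(S)$. Condition~2 is the heart of the matter. The key observation is that $c_\K(S)=c_\K(T)$ implies $S\subseteq U\iff T\subseteq U$ for every $U\in\K$: indeed $S\subseteq c_\K(S)$ by the definition of $c_\K$, and for $U\in\K$ Lemma~\ref{lem:operator} gives $S\subseteq U\Rightarrow c_\K(S)\subseteq c_\K(U)=U$, so $S\subseteq U\iff c_\K(S)\subseteq U\iff c_\K(T)\subseteq U\iff T\subseteq U$. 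Consequently, in every equality constraint $\sum_{U'\subseteq U}d_w(U')=v(U)$ the coordinates indexed by $S$ and $T$ occur together or not at all, and the nonnegativity constraints are symmetric in them; hence the coordinate transposition $\tau_{S,T}$ maps $P$ bijectively onto itself. Being a permutation matrix, $\tau_{S,T}$ preserves Lebesgue measure and the affine hull of $P$, so $\mu$ is $\tau_{S,T}$-invariant, and therefore $m(S)=\mathbb{E}_{w\sim\mu}[d_w(S)]=\mathbb{E}_{w\sim\mu}[d_{\tau_{S,T}(w)}(S)]=\mathbb{E}_{w\sim\mu}[d_w(T)]=m(T)$.

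By Proposition~\ref{prop:ic_implies_unique}, $\delta_v^\K$ is the unique vector satisfying these two conditions, so $m=\delta_v^\K$, and comparing with $\Phi_i^\K(v)=\sum_{S\ni i}\delta_v^\K(S)/|S|$ the theorem follows. The main obstacle I anticipate is not any single computation but keeping the measure-theoretic bookkeeping clean: one must work with the uniform distribution relative to the affine hull of $P$ (which is typically a proper subspace of $\mathbb{R}^{2^n-1}$), verify that $\tau_{S,T}$ preserves that hull and the induced measure, and confirm that this notion of ``uniform over $\P$-extensions'' agrees with the distribution $\P(v)$ used elsewhere and with the average value of~\cite{Bok2023}. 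If one instead simply declares the reference measure to be uniform in surplus-coordinates, the determinant-$1$ remark is unnecessary, but consistency with the rest of the paper should still be recorded.
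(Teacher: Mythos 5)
Your proof is correct, and at the decisive step it takes a genuinely different route from the paper's. Both arguments begin by pushing the expectation through the (linear) Shapley formula, so that everything reduces to identifying the mean surplus vector $m(S)=\mathbb{E}[d_w(S)]$. The paper then computes $m$ explicitly: it invokes Proposition~\ref{prop:p-extensions} to decompose the set of $\P$-extensions into a product of simplices, one per class $\mathcal{C}(S)$ with coordinate sum $\Delta_v(S)$, and reads off the centroid $m(T)=\Delta_v(c_\K(T))/|\mathcal{C}(c_\K(T))|$. You avoid this explicit computation entirely: you verify that $m$ satisfies the two defining conditions of Definition~\ref{def:UD-value} --- condition~1 by linearity of expectation over the equality constraints, condition~2 by the observation that $c_\K(S)=c_\K(T)$ forces the coordinates $S$ and $T$ to enter every constraint together, so the coordinate transposition is a measure-preserving symmetry of the polytope --- and then appeal to the uniqueness established in Proposition~\ref{prop:ic_implies_unique}. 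What your route buys is rigor about what ``uniform over $\P$-extensions'' means: the paper never pins down the distribution, and your determinant-$1$ remark showing that uniformity in surplus-coordinates and in value-coordinates agree is a genuine addition. What the paper's route buys is the closed-form expression for the mean surpluses, which it reuses in the surrounding discussion. One small caveat: your symmetry argument and the claim $S\subseteq c_\K(S)$ both rely on $N\in\K$ (so that every coalition is contained in some member of $\K$), but this is the paper's standing assumption, so nothing is lost.
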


Before proving this result, we need to describe the structure of the set of \(\P\)-extensions. In particular, we show that it can be decomposed into a family of simplices, each associated with a set of indistinguishable coalitions $\mathcal{C}(S)$. Concretely, the surpluses of the coalitions in $\mathcal{C}(S)$ can be viewed as points in a regular simplex, where the coordinates (i.e., the surplus values) sum to a special constant $\Delta_v(S)$. Each vertex of this simplex corresponds to allocating the entire amount $\Delta_v(S)$ to exactly one coalition $T \in \mathcal{C}(S)$, while assigning zero surplus to all other coalitions in $\mathcal{C}(S)$.

\begin{proposition}\label{prop:p-extensions}
    Let $(N,\K,v)$ be a \P-extendable intersection-closed game and let $\Delta_v(S) = v(S) - \sum_{T \in \K, T \subsetneq S}\Delta_v(T)$\footnote{Notice the resemblance of $\Delta_v(S)$ with the definition of the surplus $\delta_v(S)$.}. The set of \P-extensions $(N,w)$ can be expressed as
    \begin{equation}
    \left\{(N,w) \middle|\ \forall S \in \K: \sum_{T \in \mathcal{C}(S)}d_w(T) = \Delta_v(S) \text{ and } d_w(T) \geq 0, \forall T \subseteq N\right\}.
\end{equation}

\end{proposition}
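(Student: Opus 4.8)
The plan is to pass to the space of Harsanyi dividends. Recall that the recursion $d_w(S)=w(S)-\sum_{T\subsetneq S}d_w(T)$ sets up a linear bijection between cooperative games $(N,w)$ and dividend vectors $(d_w(T))_{T\subseteq N}$, under which $w(S)=\sum_{T\subseteq S}d_w(T)$ for every $S$, and $(N,w)\in\P$ precisely when $d_w(T)\ge 0$ for all $T$. Hence the set of $\P$-extensions of $(N,\K,v)$ is exactly the set of nonnegative dividend vectors satisfying the $|\K|$ linear equations $\sum_{T\subseteq S}d_w(T)=v(S)$, $S\in\K$. It therefore suffices to show that for intersection-closed $\K$ this system is equivalent to the block system $\sum_{T\in\mathcal{C}(S)}d_w(T)=\Delta_v(S)$, $S\in\K$.

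First I would establish the combinatorial backbone. Since $\K$ is intersection-closed, $\mathcal{C}(\K)=\K$, i.e.\ $c_\K(T)\in\K$ for every $T\subseteq N$; combined with Lemma~\ref{lem:operator} this shows each $T$ lies in exactly one class $\mathcal{C}(c_\K(T))$, each such class is nonempty (as $c_\K(S)=S$ for $S\in\K$), so $\{\mathcal{C}(S)\}_{S\in\K}$ is a partition of $2^N$. Next, for a fixed $S\in\K$ one has the refinement of its principal down-set
\[
\{T\subseteq N : T\subseteq S\}=\bigsqcup_{S'\in\K,\,S'\subseteq S}\mathcal{C}(S'),
\]
since $T\subseteq S$ forces $c_\K(T)\subseteq c_\K(S)=S$ with $c_\K(T)\in\K$, while conversely $T\in\mathcal{C}(S')$ with $S'\subseteq S$ gives $T\subseteq c_\K(T)=S'\subseteq S$. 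Summing the dividends over both sides of this disjoint union yields, for every $S\in\K$,
\[
w(S)=\sum_{T\subseteq S}d_w(T)=\sum_{S'\in\K,\,S'\subseteq S}e_w(S'),\qquad e_w(S'):=\sum_{T\in\mathcal{C}(S')}d_w(T).
\]

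Then I would observe that the constraints $w(S)=v(S)$, $S\in\K$, become $\sum_{S'\in\K,\,S'\subseteq S}e_w(S')=v(S)$, a triangular system over the poset $(\K,\subseteq)$ in the unknowns $(e_w(S'))_{S'\in\K}$: ordering $\K$ by a linear extension of inclusion makes its coefficient matrix unitriangular, hence invertible. Solving by forward substitution (equivalently, induction on $|S|$) shows that the unique solution is characterized by $e_w(S)=v(S)-\sum_{S'\in\K,\,S'\subsetneq S}e_w(S')$, which is exactly the recursion defining $\Delta_v(S)$. Consequently $w(S)=v(S)$ for all $S\in\K$ if and only if $\sum_{T\in\mathcal{C}(S)}d_w(T)=\Delta_v(S)$ for all $S\in\K$; reinstating the nonnegativity constraints $d_w(T)\ge 0$ then gives the stated set, with $\P$-extendability entering only to guarantee this polyhedron is nonempty.

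I expect the one genuine step to be the refinement lemma of the second paragraph, where intersection-closedness is essential — it is precisely what makes $c_\K(T)\in\K$, so that a principal down-set of $\K$ splits cleanly into full $\mathcal{C}$-classes — and where one must verify that the union is disjoint so that summing dividends does not double count. The passage from the block sums to the $\Delta_v$ recursion is then just invertibility of a triangular matrix, and the remaining identifications are bookkeeping.
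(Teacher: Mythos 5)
Your proof is correct and takes essentially the same route as the paper's: both decompose the down-set of each $S\in\K$ into the classes $\mathcal{C}(S')$ for $S'\in\K$, $S'\subseteq S$ (using intersection-closedness to ensure $c_\K(T)\in\K$), identify the block sums $\sum_{T\in\mathcal{C}(S')}d_w(T)$ with $\Delta_v(S')$ via the triangular recursion, and reinstate nonnegativity of dividends. Your write-up is merely more explicit than the paper's about the partition property, the disjointness of the union, and the converse direction (invertibility of the unitriangular system), all of which the paper leaves implicit.
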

\begin{proof}
    Once again, we use the fact from Lemma~\ref{lem:operator}, which states $X \subseteq T \implies c_\K(X) \subseteq c_\K(T)$ for every $X \subseteq T \subseteq N$. From this property, for any \P-extension $(N,w)$ and any $S \in \K$, we have
    \begin{equation}\label{eq:4}
        v(S) = w(S) = \sum_{T \subseteq S}d_w(S) = \sum_{ T \in \K, T \subseteq S}\sum_{X \in \mathcal{C}(T)}d_w(X).
    \end{equation}
    Now by fixing $\Delta_v(T) = \sum_{X \in \mathcal{C}(T)}d_w(X)$, we get from Eq.~\eqref{eq:4} that
    \begin{equation}
        \Delta_v(S) = v(S) - \sum_{T \subseteq S, T \in \K}\Delta_v(T).
    \end{equation}
    From the combination with non-negativity of surpluses, the result follows.
\end{proof}

Another way to interpret Proposition~\ref{prop:p-extensions} is to note that, we know that for each set of indistinguishable coalitions $\mathcal{C}(S)$, the positivity assumption fixes the total surplus $\Delta_v(S)$ but does not specify how it is distributed among the individual coalitions. Hence, any way of dividing $\Delta_v(S)$ among the coalitions in $\mathcal{C}(S)$ is equally plausible. In particular, if we take the average of all such distributions, each coalition in $\mathcal{C}(S)$ ends up with the same expected surplus. This uniform allocation across indistinguishable coalitions is exactly the principle behind the UD-value, illustrating its natural interpretation as an expected or average allocation when no further information is available.

% Alternative way to view the result from Proposition~\ref{prop:p-extensions} is that we know from the incomplete game and the positive assumption that the total surplus of all coalitions from one set of indistinguishible coalitions $\mathcal{C}(S)$ is given, and it is the value $\Delta_v(S)$. What remains unknown about the game, though, is the distribution of this value among surpluses of all coalitions $T \in \mathcal{C}(S)$. It is equally likely that all the surplus $\Delta_v(S)$ is due to one coalition in $\mathcal{C}(S)$ as well as any other distribution of this value among all the indistinguishable coalitions. The expected surplus corresponds to the centre of the simplex, which at the same time corresponds to equal surpluses of all of the coalitions. Equal surpluses of all of the indistinguishable coalitions is assumed for the UD-value.

\begin{proof}[Proof of Theorem 5]
    From linearity of the Shapley value and the expectation, we have
    \begin{equation*}
        \mathbb{E}_{w \sim \P(v)} \left[\phi(w)\right] = \phi(\mathbb{E}_{w \sim \P(v)} \left[w\right]).
    \end{equation*}
    Since $\P(v)$ is a combination of simplexes, it follows the average is given by an average of its vertices, i.e., $d_{\mathbb{E}(w)}(S) = \frac{\Delta_{c_\K(S)}}{|\mathcal{C}(c_\K(S))|}$. As it holds
    $d_{\mathbb{E}(w)}(S) = d_{\mathbb{E}(w)}(T)$ for every $S$, $T \subseteq N$ with $c_\K(S) = c_\K(T)$, we conclude
    \begin{equation*}
        \phi(\mathbb{E}_{w \sim \P(v)}(w)) =  \Phi^\K(v).
    \end{equation*}
\end{proof}

Theorem~\ref{thm:char} establishes a correspondence between the UD-value and the \emph{average value}~\cite{Bok2023}, originally studied in a different context. Despite differences in the underlying assumptions, both results share the same core principle: they derive a value by averaging across all admissible extensions of the game.

In Theorem~\ref{thm:char}, we implicitly assume that \((N, \K, v)\) is \(\P\)-extendable, without detailing the precise conditions under which this assumption holds. The following result clarifies the situation by drawing a neat link between \(\P\)-extendability for intersection-closed incomplete games and the positivity condition for complete games. Specifically, as the non-negativity of surpluses \(d_v(S)\)  imply positivity of complete games, the non-negativity of surpluses \(\Delta_v(S)\) imply \P-extendability of incomplete game.

\begin{proposition}
Let $(N,\K,v)$ be an intersection-closed incomplete game. It is $\P$-extendable if and only if $\Delta_v(S) \geq 0$ for every $S \in \K$.
\end{proposition}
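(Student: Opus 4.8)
The statement is an ``if and only if'', so I would prove the two directions separately, and Proposition~\ref{prop:p-extensions} does most of the work for both. For the forward direction, suppose $(N,\K,v)$ is $\P$-extendable and let $(N,w)$ be a $\P$-extension. By the computation in the proof of Proposition~\ref{prop:p-extensions}, for every $S \in \K$ we have $\Delta_v(S) = \sum_{X \in \mathcal{C}(S)} d_w(X)$. Since $(N,w) \in \P$, each $d_w(X) \geq 0$, and $\mathcal{C}(S)$ is nonempty (it contains $S$ itself), so the sum is nonnegative; hence $\Delta_v(S) \geq 0$. Note this direction only needs a single extension to exist, and the recursion defining $\Delta_v$ is exactly the one appearing in Proposition~\ref{prop:p-extensions}, so no new identity is required.

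For the converse, assume $\Delta_v(S) \geq 0$ for every $S \in \K$, and I would construct an explicit $\P$-extension. The natural candidate is the ``uniform'' one: define a complete game $(N,w)$ by prescribing its surpluses as $d_w(T) = \Delta_v(c_\K(T)) / |\mathcal{C}(c_\K(T))|$ for every $T \subseteq N$ (this is the same game that appears, under the name $\mathbb{E}(w)$, in the proof of Theorem~\ref{thm:char}, and it corresponds to the barycenter of the simplex decomposition). Two things must be checked. First, $d_w(T) \geq 0$ for all $T$: this is immediate since $\Delta_v(c_\K(T)) \geq 0$ (as $c_\K(T) \in \K$, using that the game is intersection-closed and hence $\mathcal{C}(\K) = \K$) and $|\mathcal{C}(c_\K(T))| \geq 1$; so $(N,w) \in \P$. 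Second, $w(S) = v(S)$ for every $S \in \K$: here I would run the same summation as in Eq.~\eqref{eq:4} in reverse, namely
\begin{equation*}
  w(S) = \sum_{T \subseteq S} d_w(T) = \sum_{T \in \K,\, T \subseteq S} \sum_{X \in \mathcal{C}(T)} d_w(X) = \sum_{T \in \K,\, T \subseteq S} \Delta_v(T) = v(S),
\end{equation*}
where the first regrouping uses that every $X \subseteq S$ satisfies $c_\K(X) \subseteq c_\K(S) = S$ so $X$ lies in $\mathcal{C}(T)$ for the unique $T = c_\K(X) \in \K$ with $T \subseteq S$; the third equality is because each inner sum has $|\mathcal{C}(T)|$ equal terms of size $\Delta_v(T)/|\mathcal{C}(T)|$; and the last equality is the defining recursion of $\Delta_v$ rearranged.

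The only mild subtlety — and the step I would be most careful about — is the regrouping of the sum $\sum_{T \subseteq S} d_w(T)$ into blocks indexed by $T \in \K$ with $T \subseteq S$. This relies on the fact that the closure operator partitions $2^N$ into the classes $\mathcal{C}(T)$, that for intersection-closed $\K$ each class has a representative in $\K$, and crucially that if $X \subseteq S \in \K$ then $c_\K(X) \subseteq S$, so the class of $X$ is ``captured'' inside $S$; all of this is supplied by Lemma~\ref{lem:operator} and the intersection-closedness hypothesis, so it is a matter of writing it cleanly rather than a genuine obstacle. Everything else is bookkeeping with the recursion for $\Delta_v$, which is already established. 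Alternatively, for the converse one could avoid constructing $w$ explicitly and instead invoke Proposition~\ref{prop:p-extensions}: the set displayed there is nonempty precisely when each simplex $\{(d_w(X))_{X \in \mathcal{C}(S)} : \sum d_w(X) = \Delta_v(S),\ d_w(X) \geq 0\}$ is nonempty, which holds iff $\Delta_v(S) \geq 0$; but since Proposition~\ref{prop:p-extensions} as stated already presupposes $\P$-extendability, it is cleaner to exhibit the extension directly.
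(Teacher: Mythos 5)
Your proposal is correct and follows essentially the same route as the paper: the forward direction reads off $\Delta_v(S)=\sum_{X\in\mathcal{C}(S)}d_w(X)\geq 0$ from the identity established in Proposition~\ref{prop:p-extensions}, and the converse constructs exactly the paper's uniform extension $(N,v_\delta)$ with $d_{v_\delta}(T)=\Delta_v(c_\K(T))/|\mathcal{C}(T)|$. You merely spell out the regrouping argument verifying $w(S)=v(S)$, which the paper leaves as ``follows from the construction''; that added detail is accurate.
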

\begin{proof}
    When there is a \P-extension $(N,w)$, it is immediate 
    \[\Delta_v(S) = \sum_{T \in \mathcal{C}(S)}d_w(T) \geq 0.\]
    Conversely, if $\Delta(S) \geq 0$ for every $S \in \K$, one can contruct $(N,v_\delta)$ through its surpluses for every $T \subseteq N$ as
    \[
    d_{v_\delta}(T) = \frac{\Delta(c_\K(T))}{|\mathcal{C}(T)|} \geq 0.
    \]
    From the construction, it follows $(N,v_\delta)$ is a \P-extension of $(N,\K,v)$.
\end{proof}

% Old proof for different proposition
% \begin{proof}
%     Since $(N,v_\delta)$ is an extension of $(N,\K,v)$, its positivity implies $\P$-extendability. If, on the other hand, there is a \P-extension $(N,w)$, it is not difficult to see that transfering $\varepsilon$ from the surplus of $S$ to the surplus of $T$ results in an extension of $(N,\K,v)$ if $c(S) = c(T)$. The game $(N,v_\delta)$ can be constructed from $(N,w)$ by taking the average of dividends over sets of indistinguishable coalitions. Non-negativity of surpluses of $(N,w)$ thus imply non-negativity of surpluses of $(N,v_\delta)$.
% \end{proof}
% There is alse a nice correspondence between positivity of a complete game $(N,w)$, which is given be $d_w(S)\geq 0$ for every $S \subseteq N$ and \P-extendability of $(N,\K,v)$ which is given by $\Delta_v(S) \geq 0$ for every $S \in \K$.

The complete extension $(N, v_\delta)$ introduced in the proof of the previous result can also serve as a building block for defining the UD-value. In the next section, we illustrate how this (and other) complete extensions can be used to formally define the three values discussed thus far.

\section{Comparison with Other Values}\label{sec:comparison}

In the introduction, we motivated the UD-value by illustrating a situation in which the R-value may not be the most suitable option. Subsequently, we showed that, unlike the R-value, the UD-value need not be unique. However, for intersection-closed systems, which we focus on, the UD-value is unique. In a recent work, Béal et~al.~\cite{Beal2020} introduced a value for intersection-closed systems—based on the Shapley value—which we refer to as the \emph{IC-value}.

In this section, we compare the UD-value with both the R-value and the IC-value. We begin by formally defining all three values (UD, R, and IC) using the same framework of special extension. We then demonstrate how this framework can be used to compare the three values from the perspective of incomplete games.

Béal et~al.~\cite{Beal2020} present two characterizations of the IC-value, and they note that both can be adapted to characterize other values. Here, we provide a detailed derivation of the two characterizations for the UD-value and sketch how a similar adaptation can be carried out for the R-value. Furthermore, we contrast these characterizations with the known axiomatizations of the R-value from~\cite{Calvo2015, Albizuri2022}, showing that each of the three known axiomatizations contains at least one axiom the UD-value does not satisfy.

Finally, using numerical experiments, we analyze additional differences in how these three values behave and assess which value distributes payoffs most uniformly among the agents.

Throughout this section, we denote the UD-value, the R-value, and the IC-value of an incomplete game \(\bigl(N,\K,v\bigr)\) by \(UD^\K(v)\), \(R^\K(v)\), and \(IC^\K(v)\), respectively. We often omit \(\K\) in the notation whenever this does not cause confusion.

\subsection{Unified framework}

The UD-value, the R-value, and the IC-value are all motivated by the Shapley value $\phi$. The approach all these three values take in dealing with the fact that the Shapley value is defined only for complete games is to implicitly, or explicitly assume a complete extension of the incomplete game.
\begin{definition}\label{def:special-games}
    Let $(N,\K,v)$ be an intersection-closed incomplete game. The R-game, the IC-game, and the UD-game $(N,v_{R})$, $(N,v_{IC})$, $(N,v_{UD})$ are defined as follows:
    \begin{enumerate}
        \item $d_{v_R}(S) = \begin{cases}
            v(S) - \sum_{T \subseteq S} d_{v_{R}}(T) & S \in \K,\\
            0 & S \notin \K.\\
        \end{cases}$
        \item $v_{IC}(S) = v(c_\K(S))$ for $S \subseteq N$.
        \item $d_{v_{UD}}(S) = \delta_v^\K(S)$ for $S \subseteq N$, where $\delta_v^\K(S)$ is from Definition~\ref{def:UD-value}.
    \end{enumerate}
\end{definition}
Shapley value of each of the special complete games now determines one of the values.
\begin{definition}\label{def:special-values}
    Let $(N,\K,v)$ be an intersection-closed incomplete game. The \emph{R-value} $R^\K(v)$, the \emph{IC-value} $IC^\K(v)$, and the \emph{UD-value} $UD^\K(v)$ can be defined as follows:
    \begin{enumerate}
        \item $R^\K(v) = \phi(v_R)$,
        \item $IC^\K(v) = \phi(v_{IC})$,
        \item $UD^\K(v) = \phi(v_{UD})$.
    \end{enumerate}
\end{definition}
Games from Definition~\ref{def:special-games} are extensions of $(N,\K,v)$. Following the scope of incomplete games, we can compare the values by studying properties of these extensions. We show that games $(N,v_R)$ and $(N,v_{UD})$ are significantly more similar to each other than to $(N,v_{IC})$. This is because, under \P-extendability, both of these are \P-extensions, while $(N,v_{IC})$ is in general not. A classical property, always satisfied by $(N,v_{IC})$, is \emph{monotonicity}, i.e.
\begin{equation}
    v_{IC}(S) \leq v_{IC}(T), \hspace{4ex} S \subseteq T \subseteq N.
\end{equation}
Later on, we denote the set of all monotonic $n$-player cooperative games as $\mathbb{M}^n$.
There exists a hierarchy of monotonicity concepts known as \emph{$k$-monotonicity} for $k \geq 1$, which contains monotonicity and positivity as special cases. Monotone games represent the broadest class in this hierarchy, referred to as \emph{1-monotone}, while positive games corresponds to the most restrictive class, known as \emph{$\infty$-monotone} or \emph{totally monotone}. For further details, see~\cite{Grabisch2016}.

\begin{proposition}\label{prop:special-games}
    Let $(N,\K,v)$ be \P-extendable intersection-closed incomplete game. Then $(N,v_R)$ and $(N,v_{UD})$ are its \P-extensions, while, in general, $(N,v_{IC})$ is only $\mathbb{M}^n$-extension.
\end{proposition}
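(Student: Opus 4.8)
The plan is to verify the three claims separately, relying on the characterizations already established. For the UD-game, I would simply invoke Definition~\ref{def:special-games}, which sets $d_{v_{UD}}(S) = \delta_v^\K(S)$, together with the last displayed construction in the previous proof: the complete game $(N,v_\delta)$ built via $d_{v_\delta}(T) = \Delta_v(c_\K(T))/|\mathcal{C}(T)|$ is a \P-extension whenever $(N,\K,v)$ is \P-extendable. The remaining point is that $(N,v_{UD})$ coincides with this $(N,v_\delta)$; this follows because the $\delta_v^\K$ are, by Definition~\ref{def:UD-value}, constant on each class $\mathcal{C}(S)$ and satisfy the type-1 equations, and by the (unique) solution analysis in the proof of Proposition~\ref{prop:ic_implies_unique} together with Eq.~\eqref{eq:2} one gets $|\mathcal{C}(S)|\,\delta_v^\K(S) = \Delta_v(c_\K(S))$ for each $S$. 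Hence $d_{v_{UD}}(S) = \Delta_v(c_\K(S))/|\mathcal{C}(S)| \geq 0$, so $(N,v_{UD}) \in \P$; and it agrees with $v$ on $\K$ by the type-1 conditions, so it is a \P-extension.

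For the R-game, I would argue directly that its surpluses are nonnegative. By construction $d_{v_R}(S) = 0 \geq 0$ for $S \notin \K$, so it suffices to show $d_{v_R}(S) \geq 0$ for $S \in \K$. Here I would exploit intersection-closedness: for $S \in \K$, since $c_\K(T) \in \K$ for every $T \subseteq S$ and $d_{v_R}$ vanishes off $\K$, the recursion collapses to $d_{v_R}(S) = v(S) - \sum_{T \in \K,\, T \subsetneq S} d_{v_R}(T)$ — but this is exactly the recursion defining $\Delta_v(S)$ in Proposition~\ref{prop:p-extensions}. A short induction on $|S|$ over $S \in \K$ shows $d_{v_R}(S) = \Delta_v(S)$, and by the previous proposition \P-extendability gives $\Delta_v(S) \geq 0$. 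Since $v_R$ also matches $v$ on $\K$ (immediate from summing the type-1 recursion), $(N,v_R)$ is a \P-extension.

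For the IC-game, monotonicity is essentially Lemma~\ref{lem:operator}(1): for $S \subseteq T$ we have $c_\K(S) \subseteq c_\K(T)$, both lying in $\K$, and since $(N,v)$ restricted to $\K$ respects the order on its known values under \P-extendability (any \P-extension $w$ has $w$ monotone, and $w(c_\K(S)) = v(c_\K(S))$, $w(c_\K(T)) = v(c_\K(T))$), one concludes $v_{IC}(S) = v(c_\K(S)) \leq v(c_\K(T)) = v_{IC}(T)$. I would also exhibit (or cite an earlier example) that $(N,v_{IC})$ need not be positive — e.g. a small intersection-closed system where $v_{IC}$ has a negative surplus on a non-closed coalition — to justify the phrase ``only $\mathbb{M}^n$-extension.'' The main obstacle, and the step deserving the most care, is the identification $d_{v_R}(S) = d_{v_{UD}}(S)\cdot|\mathcal{C}(S)| = \Delta_v(S)$ for $S\in\K$: one must be careful that the R-recursion, though nominally summing over all $T \subsetneq S$, really only sees the known sub-coalitions (because $d_{v_R}$ is zero elsewhere) and that this matches the $\Delta_v$-recursion, which sums only over $\K$; checking these two recursions agree — and hence that both $v_R$ and $v_{UD}$ inherit nonnegativity from $\Delta_v \geq 0$ — is the technical heart of the argument.
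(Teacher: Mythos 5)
Your proof is correct and follows essentially the same route as the paper's: identifying $d_{v_R}(S)=\Delta_v(S)$ for $S\in\K$, deducing positivity of $(N,v_{UD})$ from the nonnegativity of the $\delta_v^\K$, and obtaining monotonicity of $(N,v_{IC})$ from Lemma~\ref{lem:operator} together with the order-preservation of $v$ on $\K$ under \P-extendability. If anything, you are slightly more careful than the paper, which asserts $\delta_v^\K(S)\ge 0$ ``by Definition~\ref{def:UD-value}'' even though that definition does not literally state it; your explicit identification $\delta_v^\K(S)=\Delta_v(c_\K(S))/|\mathcal{C}(S)|$ is the right way to justify that step.
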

\begin{proof}
    Positivity of $(N,v_{R})$ follows from the surpluses of this games, which can be rewritten for $S \in \K$ as $d_{v_{R}}(S) = \Delta_v(S)$. Positivity of $(N,v_{UD})$ follows from non-negativity of $\delta_v^\K(S)$ for every $S \subseteq N$, which is given by Definition~\ref{def:UD-value}. Monotonicity of $(N,v_{IC})$ follows from the fact that for \P-extendable $(N,\K,v)$,
    \begin{equation}\label{eq:5}
        v(S) \leq v(T), \hspace{4ex} S, T \in \K, S \subseteq T.
    \end{equation}
    Eq.~\eqref{eq:5} is satisfied from \P-extendability of $(N,\K,v)$, as there is $(N,w)$, a \P-extension, satisfying for every $S$, $T \in \K$, $S \subseteq T$,
    \begin{equation}
        v(S) = \sum_{X \subseteq S}d_v(X) \leq \sum_{X \subseteq T}d_v(X) \leq v(T).
    \end{equation}
    Now for every $S \subseteq T \subseteq N$, inequality $v_{IC}(S) \leq v_{IC}(T)$ corresponds to
    \begin{equation}\label{eq:6}
        v(c_\K(S)) \leq v(c_\K(T)), \hspace{4ex} c_\K(S),\ c_\K(T) \in \K.
    \end{equation}
    According to Lemma~\ref{lem:operator}, $c_\K(S) \subseteq c_\K(T)$, thus~\eqref{eq:6} follows from~\eqref{eq:5}.
\end{proof}
It is important to note that while results such as Proposition~\ref{prop:special-games} underscore differences between the underlying games that give rise to each value, these differences do not necessarily translate into disparities in the values themselves. This distinction arises because the Shapley value, viewed as a mapping from the space of games $\mathbb{R}^{2^n}$ to the space of payoff distributions $\mathbb{R}^n$, can map two very different games to surprisingly similar payoff allocations. In Section~\ref{sec:experiments}, we provide empirical results indicating that this phenomenon—that the R-value and the UD-value often lie close to each other, while both remain relatively distant from the IC-value—holds broadly in practice.

\subsection{Axiomatic comparison}
In Beál~\cite{Beal2020}, two axiomatizations of the IC-value were introduced. Both of these axiomatizations build heavily on the fact that $IC(v) = \phi(v_{IC})$. We modify these axiomatizations for the UD-value and sketch the modification for the R-value. Throughout this section, $\Gamma^n_\K$ denote the set of incomplete games on $n$ players and set system $\K$. Further we denote the set of complete games, which arise from the definition of the UD-values as
\begin{equation}
    \mathcal{UD}^n_\K = \left\{(N,w) \middle| d_w(T) = d_w(S) \text{ for } S,T \subseteq N, c_\K(S) = c_\K(T) \right\}.
\end{equation}

The first axiomatization builds on the classical result of Shapley~\cite{Shapley1953}, which characterizes the Shapley value using four axioms; \emph{Efficiency}, \emph{Additivity} \emph{Null player}, and \emph{Equal treatment of players}. To introduce the modifications of the axioms for incomplete games, assume $f^\K \colon \Gamma^n_\K \to \mathbb{R}^n$ where $\K$ is intersection-closed.

\textbf{Efficiency.} For every $(N, \K, v)$, it holds that
\[
\sum_{i \in N} f^\K_i(v) = v(N).
\]

\textbf{Additivity.} For every $(N, \K, v)$, $(N,\K,w)$ it holds that
\begin{equation}
f^\K(v + w) = f^\K(v) + f^\K(w).
\end{equation}

Recall a player $i \in N$ of a complete cooperative game $(N,w)$ is a \emph{null player}, if $w(S \cup i) = w(S)$ for every $S \subseteq N$. What follows is the axiom of the null player, which is a slightly weaker variant of the classical axiom, because it has to be satisfied only for games in $\mathcal{UD}^n_\K$.

\textbf{IC-Null player.} For every $(N, \K, v)$ with $v \in \mathcal{UD}^n_\K$, if $i \in N$ is a null player then
\begin{equation}
f^\K_i(v) = 0.
\end{equation}
The equal treatment axiom states that equal agents receive equal payoffs. Agents $i,j \in N$ are \emph{equal} in a complete game $(N,w)$ if $w(S \cup i) = w(S \cup j)$ for every $S \subseteq N \setminus \{i,j\}$. Once again, we provide a weaker variant restricted only to $\mathcal{UD}^n_\K$.

\textbf{IC-Equal treatment.} For every $(N, \K, v)$ with $v \in \mathcal{UD}^n_\K$, if $i,j \in N$ are equal, then
\begin{equation}
   f^\K_i(v) = f^\K_j(w). 
\end{equation}
The axioms listed above are insufficient to guarantee the uniqueness of the value. This limitation arises because it is mathematically possible for $f^\K(v) \neq f^\K(w)$ even when $v(S) = w(S)$ for every $S \in \K$, which is not desirable. We can fix this by the equality axiom.

\textbf{Equality.} For all $(N, \K, v)$, $(N,\K,w)$ satisfying $v(S) = w(S)$, $S \in \K$, it holds that
\begin{equation}
f^\K(v) = f^\K(w).
\end{equation}

In B\'{e}al et al.~\cite{Beal2020}, a different axiom was used, which stated that any game $(N,\K,v)$ with $v(S) = 0$ for every $S \in \K$ should satisfy for all $i,j \in N$,
\begin{equation}
    f^\K_i(v) = f^\K_j(v).
\end{equation}
It was proven that the Equality axiom can be derived from Efficiency, Additivity, and this alternative axiom. We find the Equality axiom more intuitive and easier to explain, however, we note the alternative can be used in our characterization as well.
\begin{proposition}\label{prop:beal-axiom}
    Let $\K$ be intersection-closed. The UD-value is the only function $f^\K \colon \Gamma^n_\K \to \R$ satisfying Efficiency, Additivity, IC-null player, IC-Equal treatment and Equality.
\end{proposition}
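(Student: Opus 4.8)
The plan is the standard two-part argument for Shapley-style axiomatic characterizations: first verify that the UD-value satisfies all five axioms, then show these axioms pin down a unique function. For existence, recall from Definition~\ref{def:special-values} that $UD^\K(v) = \phi(v_{UD})$ where $v_{UD}$ has surpluses $\delta_v^\K$. Efficiency follows because $\phi$ is efficient and $v_{UD}(N) = \sum_{T\subseteq N}\delta_v^\K(T) = v(N)$ (using $N \in \K$ and condition 1.\ of Definition~\ref{def:UD-value}). Additivity follows since the defining linear system for $\delta_v^\K$ is linear in $v$, hence $v \mapsto v_{UD}$ is linear, and $\phi$ is linear. Equality is essentially immediate: if $v(S) = w(S)$ for all $S \in \K$, then $v$ and $w$ induce the same linear system (conditions 1.\ and 2.\ depend only on the values on $\K$ and on $\K$ itself), so $\delta_v^\K = \delta_w^\K$ by uniqueness (Proposition~\ref{prop:ic_implies_unique}), giving $v_{UD} = w_{UD}$. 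For IC-Null player and IC-Equal treatment, one observes that when $v \in \mathcal{UD}^n_\K$ already (its own surpluses are constant on each class $\mathcal{C}(S)$), the UD-game coincides with $v$ itself, i.e.\ $v_{UD} = v$; then IC-Null player and IC-Equal treatment reduce to the classical Null player and Equal treatment axioms for $\phi$, which $\phi$ satisfies.

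For uniqueness, suppose $f^\K$ satisfies all five axioms. By Equality, $f^\K(v)$ depends only on the restriction $v|_\K$, so it suffices to determine $f^\K$ on one representative of each equivalence class — and the natural representative is the UD-game $v_{UD} \in \mathcal{UD}^n_\K$, which agrees with $v$ on $\K$ by Proposition~\ref{prop:special-games} (it is a $\P$-extension, in particular an extension). Thus it is enough to show $f^\K(w) = UD^\K(w) = \phi(w)$ for every $w \in \mathcal{UD}^n_\K$. The class $\mathcal{UD}^n_\K$ is a linear subspace of $\mathbb{R}^{2^n}$; the plan is to exhibit a spanning set of this subspace consisting of games on which $f^\K$ is forced to equal $\phi$ by the Null player / Equal treatment axioms, and then invoke Additivity. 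The natural candidates are the "symmetrized unanimity" games: for each class $\mathcal{C}(S)$, let $u_{\mathcal{C}(S)}$ be the game whose surplus equals $1$ on every coalition in $\mathcal{C}(S)$ and $0$ elsewhere. These clearly lie in $\mathcal{UD}^n_\K$, and one checks they span it (a game in $\mathcal{UD}^n_\K$ is exactly a linear combination $\sum_{S} \lambda_S u_{\mathcal{C}(S)}$ with one coefficient per class, since surpluses are free subject only to being constant on classes).

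The crux is therefore to show that $f^\K(u_{\mathcal{C}(S)}) = \phi(u_{\mathcal{C}(S)})$ for each class. Here is where IC-Null player and IC-Equal treatment enter: $u_{\mathcal{C}(S)} \in \mathcal{UD}^n_\K$, so both weak axioms apply to it. One must identify its null players and its groups of mutually-equal players. A player $i \notin c_\K(S)$ is a null player of $u_{\mathcal{C}(S)}$: if $i \notin c_\K(S)$ then no coalition in $\mathcal{C}(S)$ contains $i$ (since each $T \in \mathcal{C}(S)$ satisfies $T \subseteq c_\K(T) = c_\K(S)$), so adding $i$ never changes the value; hence $f^\K_i(u_{\mathcal{C}(S)}) = 0 = \phi_i(u_{\mathcal{C}(S)})$. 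For the players in $c_\K(S)$, one argues they fall into groups of equal players whose common payoff is then determined by Efficiency together with the null-player payoffs; the cleanest route is to check that any two players $i, j \in c_\K(S)$ are in fact equal in $u_{\mathcal{C}(S)}$ — this requires that the family $\mathcal{C}(S)$ be symmetric under the transposition $(i\,j)$, which holds because $c_\K$ commutes with that transposition (both $i,j$ lie in $c_\K(S)$, and for any $T$, swapping $i,j$ maps $c_\K(T)$ to $c_\K((i\,j)T)$, so $T \mapsto (i\,j)T$ is a bijection of $\mathcal{C}(S)$). Granting this, all players in $c_\K(S)$ are mutually equal, so IC-Equal treatment forces them to share equally, and Efficiency fixes the common value as $v_{u_{\mathcal{C}(S)}}(N)/|c_\K(S)|$, which is precisely $\phi_i$ on the symmetric unanimity-type game. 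Combining over the spanning set via Additivity yields $f^\K(w) = \phi(w)$ on $\mathcal{UD}^n_\K$, and then Equality transports this to all of $\Gamma^n_\K$.

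The main obstacle I anticipate is the symmetry/equal-players verification for $u_{\mathcal{C}(S)}$: one needs that for $i,j \in c_\K(S)$ the transposition $(i\,j)$ genuinely permutes $\mathcal{C}(S)$, which rests on the fact that the closure operator $c_\K$ is equivariant under permutations of $N$ that fix $\K$ setwise — but here $\K$ need not be symmetric, so the argument must be made purely at the level of the specific class $\mathcal{C}(S)$, using only that $i,j \in c_\K(S)$. This is where the intersection-closedness ($c_\K(T) \in \K$) and the structure of closures is really used, and getting the bookkeeping right — rather than over-claiming a global symmetry of $\K$ — is the delicate point. A safe fallback, if the clean "all equal" claim is awkward, is the inductive peeling argument in the style of Proposition~\ref{prop:ic_implies_unique}: order the classes by inclusion of their closures, and on the minimal ones use IC-Null player plus Efficiency directly, then proceed upward — but I expect the spanning-set argument above to be the shortest once the equal-players lemma is nailed down.
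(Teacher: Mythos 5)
Your existence argument and the overall strategy (reduce to $\mathcal{UD}^n_\K$ via Equality, decompose into a spanning set, apply Additivity) match the paper. But the uniqueness part has a genuine gap at exactly the point you flagged as delicate: the claim that for $i,j\in c_\K(S)$ the transposition $(i\,j)$ permutes $\mathcal{C}(S)$, hence that all players of $c_\K(S)$ are equal in $u_{\mathcal{C}(S)}$, is false. The closure operator is not equivariant under $(i\,j)$ unless $\K$ itself is, and membership of $i,j$ in $c_\K(S)$ does not rescue this. Concretely, take $N=\{1,2,3\}$ and $\K=\{\emptyset,\{1\},\{1,2\},\{1,2,3\}\}$, which is intersection-closed. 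Then $c_\K(\{2\})=\{1,2\}$ while $c_\K(\{1\})=\{1\}$, so $\mathcal{C}(\{1,2\})=\{\{2\},\{1,2\}\}$ and $(1\,2)$ maps $\{2\}$ to $\{1\}\notin\mathcal{C}(\{1,2\})$. For the corresponding game $u_{\mathcal{C}(\{1,2\})}$ one gets $u(\{1\})=0$ but $u(\{2\})=1$, so players $1$ and $2$ are not equal (and neither is a null player); IC-Equal treatment, IC-Null player and Efficiency then pin down only $f_3=0$ and $f_1+f_2=2$, leaving the split between $1$ and $2$ undetermined. So your spanning set does not consist of games on which the axioms force the value.

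The repair is essentially the paper's choice of basis: instead of putting dividend $1$ on a single closure class, use for each $S\in\K$ the game $b_S$ with $d_{b_S}(T)=2^{-s}$ for \emph{all} $T\subseteq S$ and $0$ otherwise. One checks $b_S\in\mathcal{UD}^n_\K$ (if $c_\K(T)=c_\K(X)$ and $T\subseteq S\in\K$ then $X\subseteq c_\K(X)=c_\K(T)\subseteq S$), the $b_S$ are linearly independent and hence a basis of $\mathcal{UD}^n_\K$, and — crucially — $b_S$ is genuinely symmetric in the players of $S$ and has all players outside $S$ null, so Efficiency, IC-Null player and IC-Equal treatment determine $f^\K(\alpha_S b_S)$ completely. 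In terms of your games, $b_S$ is (up to the scalar $2^{-s}$) the sum of $u_{\mathcal{C}(T)}$ over all $T\in\K$ with $T\subseteq S$; it is this aggregation over all classes below $S$ that restores the symmetry your single-class games lack. Your fallback "inductive peeling" idea is pointing in the right direction but is not developed enough to close the gap on its own.
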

\begin{proof}
    (Existence) Since the UD-value is defined as the Shapley value of $(N,v_{UD})$, it is immediate that it satisfies the first four axioms. The equality follows from the fact that $w_{UD} = v_{UD}$ for any $(N,\K,v)$ and $(N,\K,w)$, which coincide on values of $S \in \K$.

    (Uniqueness) From the Equality axiom, $f^\K(v) = f^\K(v_{UD})$. Now we can follow the proof\footnote{In the original proof of Shapley, the unanimity games were used instead of games $(N,b_S)$.} of Shapley~\cite{Shapley1953} and rewrite $v_{UD}$ as a linear combination
    \begin{equation}\label{eq:7}
        v_{UD} = \sum_{S \in \K}\alpha_Sb_S,
    \end{equation}
    where $b_S$ is defined through its dividends as
    \begin{equation}
        d_{b_S}(T) = \begin{cases}
            \frac{1}{2^s} & T \subseteq S,\\
            0 & T \not\subseteq S.\\
        \end{cases}
    \end{equation}
    It is a straightforward exercise in elementary linear algebra to verify that games $(N, b_S)$ are linearly independent and are contained within \(\mathcal{UD}_\K^n\). Therefore, this set forms a basis for \(\mathcal{UD}_\K^n\), and any $v_{UD}$ can be uniquely expressed as a linear combination of these games. Using Additivity, we have $f^\K(v_{UD}) = \sum_{S \in \K} f^\K(\alpha_S b_S)$. The next observation is that for $i \in N \setminus S$, the player i is a null player, while for $j, k \in S$, these players are treated equally. By combining this observation with the axioms of Efficiency, IC-null player, and IC-equal treatment, we can derive the following:

\begin{equation}
f^\K_i(\alpha_S b_S) =
\begin{cases}
\frac{\alpha_S}{|S|} &  i \in S,\\
0 &  i \notin S.
\end{cases}
\end{equation}
\end{proof}

The only difference between our axiomatization and the one in B\'{e}al et al.~\cite{Beal2020} is in the IC-Null player and IC-Equal treatment axioms; namely, restriction $v \in \mathcal{UD}_\K^n$ should be substituted with $v \in \mathcal{IC}_\K^n$ where $\mathcal{IC}_\K^n$ is the set of all complete games which arise as $(N,v_{IC})$. Similarly, for the R-value, one would have to resort to $v \in \mathcal{R}_\K^n$ representing the set of all complete games $(N,v_R)$. Following the proof of Proposition~\ref{prop:beal-axiom}, the only difference occurs at Eq.~\eqref{eq:7} where bases of $\mathcal{IC}_\K^n$ and $\mathcal{R}_\K^n$ have to be considered. For $\mathcal{IC}_\K^n$, the basis of \emph{lower games} was considered (see B\'{e}al et al.~\cite{Beal2020} for details), while for $\mathcal{R}_\K^n$, one can consider the basis of \emph{unanimity games} $(N,u_S)$ for $S \in \K$. 

The second axiomatization of the IC-values employs two additional axioms: \emph{$\phi$-consistency} and \emph{Invariance from irrelevant changes}. Both of these axioms assume that $\K$ is not fixed but rather an argument of $f$ as well. Therefore, formally, instead of considering the allocation rule of form $f^\K \colon \Gamma^n_\K \to \mathbb{R}$, we consider $f \colon \Gamma_{\mathcal{UD}}^n \to \mathbb{R}$, where 
\[\Gamma^n_{\mathcal{UD}} = \bigcup_{\K \subseteq 2^N ...\text{intersection-closed}}\Gamma^n_{\K}.\]

\textbf{$\phi$-consistency.} For an incomplete game $(N,\K,v)$ where $\K = 2^N$, it holds
\begin{equation}
f^{\K}(v) = \phi(v).
\end{equation}

Value $f$ , satisfying the $\phi$-consistency axiom, is a generalization of the Shapley value to intersection-closed set systems. The Invariance from irrelevant changes axiom can be viewed as generalization of the Equality axiom to two, possibly different set systems.

\textbf{Invariance from irrelevant changes.} For any $(N,\K_1,v)$ and $(N,\K_2,w)$, if $v_{UD} = w_{UD}$ then
\begin{equation}
f^{\K_1}(v) = f^{\K_2}(w).
\end{equation}

For $\K_1 = \K_2$, $v_{UD} = w_{UD}$ if and only if $v(S) = w(S)$ for every $S \in \K_1$. Therefore, the Equality axiom follows from Invariance from irrelevant changes.

\begin{proposition}\label{prop:char2}
    The UD-value is the only function $f \colon \Gamma_{\mathcal{UD}}^n \to \mathbb{R}$ satisfying $\phi$-consistency, and Invariance from irrelevant changes.
\end{proposition}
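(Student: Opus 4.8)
The plan is to mirror the two-part structure used for Proposition~\ref{prop:beal-axiom}: first verify that the UD-value satisfies both axioms (existence), then show these two axioms pin down the value uniquely (uniqueness).

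For existence, $\phi$-consistency is immediate: when $\K = 2^N$ there are no type~2 equations in Definition~\ref{def:UD-value}, so $\delta_v^\K(S) = d_v(S)$, hence $v_{UD} = v$ and $UD^\K(v) = \phi(v_{UD}) = \phi(v)$. For Invariance from irrelevant changes, observe that by Definition~\ref{def:special-values} the UD-value depends on $(N,\K,v)$ only through the complete game $(N,v_{UD})$: indeed $UD^{\K_i}(v) = \phi(v_{UD})$ and $UD^{\K_2}(w) = \phi(w_{UD})$, so if $v_{UD} = w_{UD}$ then the two values agree. I would include a short remark here that $v_{UD}$ is well-defined (uniqueness of the $\delta_v^\K$) precisely because $\K$ is intersection-closed, which is the standing assumption in $\Gamma^n_{\mathcal{UD}}$.

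For uniqueness, suppose $f$ satisfies both axioms. Given $(N,\K,v) \in \Gamma^n_{\mathcal{UD}}$, form the complete game $(N,v_{UD})$; this is itself an incomplete game with the full set system $2^N$, i.e. the pair $(N, 2^N, v_{UD}) \in \Gamma^n_{\mathcal{UD}}$. Since $(v_{UD})_{UD} = v_{UD}$ (the UD-reduction of a complete game is that game itself, again because there are no type~2 equations), Invariance from irrelevant changes applied to the pair $(N,\K,v)$ and $(N,2^N,v_{UD})$ gives $f^\K(v) = f^{2^N}(v_{UD})$. Now $\phi$-consistency applied to $(N,2^N,v_{UD})$ gives $f^{2^N}(v_{UD}) = \phi(v_{UD}) = UD^\K(v)$. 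Chaining these, $f^\K(v) = UD^\K(v)$ for every game, so $f$ coincides with the UD-value.

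The main obstacle is essentially a bookkeeping one rather than a deep one: I must be careful that Invariance from irrelevant changes is stated for arbitrary pairs of set systems (which the excerpt confirms: ``For any $(N,\K_1,v)$ and $(N,\K_2,w)$''), so that I am entitled to compare $(N,\K,v)$ with $(N,2^N,v_{UD})$ even though the second has a different — indeed the maximal — set system. I also need the small lemma that the UD-reduction is idempotent, i.e. $(w_{UD})_{UD} = w_{UD}$ when $w$ is already complete, which follows directly from the observation (already made in the text after Definition~\ref{def:UD-value}) that $\K = 2^N$ forces $\delta^\K_w = d_w$. With these in hand the argument is a two-step chase through the two axioms and requires no new computation.
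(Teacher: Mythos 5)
Your proposal is correct and follows essentially the same route as the paper: both arguments hinge on applying Invariance from irrelevant changes with $\K_2 = 2^N$ to reduce to the complete game $(N,v_{UD})$ and then invoking $\phi$-consistency, together with the observation that $\K = 2^N$ forces $\delta^\K_w = d_w$. The only cosmetic difference is that you compare $(N,\K,v)$ with $(N,2^N,v_{UD})$ in a single application of the invariance axiom (using idempotence of the UD-reduction), whereas the paper first passes through the derived Equality axiom to get $f^\K(v)=f^\K(v_{UD})$ and then changes the set system.
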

\begin{proof}
(Uniqueness) Consider intersection-closed $(N,\K,v)$. As Invariance from irrelevant changes implies Equality, we have $f^\K(v) = f^\K(v_{UD})$. The key step is in considering Invariance from irrelevant changes for $\K_1=\K$ and $\K_2 = 2^N$. It follows
\begin{equation}\label{eq:8}
    f^\K(v_{UD}) = f^{2^N}(v_{UD}).
\end{equation}
Finally from $\phi$-consistency, it follows $f^{2^N}(v_{UD}) = \phi(v_{UD}) = UD^\K(v)$.

\textbf{(Existence)} From Definition~\ref{def:UD-value}, it is straightforward that the UD-value satisfies $\phi$-consistency. The invariance under irrelevant changes also follows immediately, as $UD^{\K_1}(v) = \phi(v_{UD}) = \phi(w_{UD}) = UD^{\K_2}(w)$, given that $v_{UD} = w_{UD}$.

\end{proof}
Once again, characterization from Proposition~\ref{prop:char2} can be easily modified for the R-value, by modifying the Invariance from irrelevant changes axiom. Eq.~\eqref{eq:8} should be satisfied for games $(N,\K_1,v)$ and $(N,\K_2,w)$ satisfying $v_{R} = w_{R}$.

Both presented characterizations show similarities of the three values. We employ axioms from the existing characterization of the R-value to show more of the dissimilarities. One characterization of the R-value employs the axiom of \emph{Fairness}. 

\textbf{Fairness.} For every $i,j \in S$ and $S \in \K$,
\begin{equation}
    f_i^\K(v) - f_i^{\K \setminus S}(v) = f_j^\K(v) - f_j^{\K \setminus S}(v).
\end{equation}
The fairness axiom states that the loss of information regarding value of $S$ should affect all agents in $S$ in the same way. This behavior is typical for the R-value, however, the following example shows that the UD-value does not hold this property.
\begin{example}\label{ex:1}
    Let $N = \{1,2,3\}$ and $\K = \{\emptyset, \{1\}, \{1,2\}, \{1,2,3\}\}$. Further, let $v(N)=1$ and $v(S) = 0$, otherwise. If we consider $S = \{1,2\}$, the fairness is not satisfied for the UD-value. This is because after $\{1,2\}$ is taken away, surplus of all coalitions except for subsets of $\{1,2\}$ is uniformly decreased, while the surplus of $\{1,2\}$ and $\{2\}$ is equally increased and the surplus of $\{1\}$ remains the same. While the decrease of the surpluses affect the value of both of the players the same, the increase favors player $2$.
\end{example}
Another axiom, which occurs in a different characterization of the R-value is the axiom of \emph{Balanced contributions}. This axiom states that for any two players, the amount that each player would gain or lose by the other player's withdrawal from the game should be equal. Formally, a withdrawal of an agent $i$ from an incomplete game $(N,\K,v)$ can be captured by $(N \setminus i, \K_{-i},v_{-i})$ where $\K_{-i} = \left\{S \in \K \middle| i \notin S\right\}$ and $v_{-i}$ is the restriction of $v$ to $N \setminus i$.

\textbf{Balanced contributions.}
For every $i,j \in N$, it holds
\begin{equation}
    f^\K_i(v) - f^{\K_{-j}}_i(v_{-j}) = f^{\K}_j(v) - f^{\K_{-i}}_j(,v_{-i}).
\end{equation}
Even if $\K_{-i}$ and $\K_{-j}$ are intersection closed, which does not hold for every $\K$, the balanced contributions axiom does not have to be satisfied for the UD-value.
\begin{example}\label{ex:2}
    Let $N = \{1,2,3\}$ and $\K = \{\emptyset, \{1\}, \{2\}, \{1,2\}, \{2,3\}, \{1,2,3\}\}$. Further, let the values be given as
    \begin{itemize}
        \item $v(\emptyset) = v(\{2\}) = 0$,
        \item $v(\{1\}) = 1$,
        \item $v(\{1,2\}) = v(\{2,3\}) = 2$, and
        \item $v(\{1,2,3\}) = 4$.
    \end{itemize}
    While $UD_1^\K(v) = UD_3^\K(v) = 1.5$, we have $UD_1^{\K_{-3}}(v_{-3}) = 1.5$ and at the same time $UD_3^{\K_{-1}}(v_{-1}) = 1$.
\end{example}

The third and last is the axiom of \emph{Symmetric Partnership}, which occured in axiomatization by Albizzuri et al.~\cite{Albizuri2022}. This axiom employs the so called \emph{coalition of partners}, which for an incomplete game $(N,\K,v)$ is defined as a coalition $P \subseteq N$ satisfying for every $S \in \K$ for which $P \setminus S \neq \emptyset$ that
\begin{enumerate}
 \item $S \setminus P \in \K \implies v(S) = v(S \setminus P)$,
 \item $S \setminus P \notin \K \implies \forall T \in \K, T \subseteq S: v(T) = 0$.
\end{enumerate}
Coalition $P$ is a coalition of partners if no proper subset of players from $P$ makes any contribution to any coalition outside $P$. For this partnership, the agents are rewarded equally.

\textbf{Symmetric Partnership.}
If $P$ is a coalition of partners in $(N,\K,v)$, then for every $i,j \in P$,
\begin{equation}
    f^\K_i(v) = f^\K_j(v).
\end{equation}
The Symmetric Partnership axiom is not satisfied for the UD-value. This can be shown on a modification of Example~\ref{ex:2}.
\begin{example}
    Let $N = \{1,2,3\}$ and $\K = \{\emptyset,\{1\},\{2\},\{1,2\},\{2,3\},\{1,2,3\}\}$. Further, let the values be given as
    \begin{itemize}
        \item $v(\emptyset) = v(\{1\}) = v(\{2\}) = v(\{1,2\}) = 0$, and
        \item $v(\{2,3\}) = v(\{1,2,3\}) = 1$.
    \end{itemize}
    Coalition $P = \{1,2\}$ is a coalition of partners, however, $UD^\K_1 = 0$, while $UD^\K_2 = 0.25$.
\end{example}

We note that the three axioms cannot be satisfied for the IC-value as well as for the UD-value, however, we omit the examples for this value.

\subsection{Experimental comparison}\label{sec:experiments}
We conducted several computational experiments on intersection-closed systems to compare the R-value, the UD-value, and the IC-value. Two main observations emerged from these experiments:

\begin{enumerate}
    \item On average, the UD-value and the R-value lie much closer to each other than either does to the IC-value.
    \item Among the three values, the IC-value most uniformly distributes the grand coalition’s payoff among the agents.
\end{enumerate}

\subsubsection{Difference of values}
In our first experiment, we compare the $\ell_1$-norms of the differences between the values for a given intersection-closed game $(N, \mathcal{K}, v)$. Specifically, we compute $||R(v) - IC(v)||_1$, $||R(v) - UD(v)||_1$, and $||UD(v) - IC(v)||_1$. To capture the average behavior of these norms for a given $\mathcal{K}$, we generate multiple random games where the value $v(S)$ for each $S \in \mathcal{K}$ is selected uniformly from interval $[0, 1]$. We then compute the average and the standard deviation of all the norms obtained.

We encode each coalition $S \subseteq N$ as a binary number of length $n$. In this encoding, the bits corresponding to elements in $S$ (counting from the least significant bit) are set to $1$, while the rest are set to $0$. For example, if $n = 5$ and $S = \{1, 2, 5\}$, we set bits at positions $1, 2,$ and $5$ to $1$. This results in the binary number $10011$, which equals $19$ in decimal form.

Similarly, to uniquely represent a set system $\K \subseteq 2^N$, we use a binary number of length at most $2^n$. In this number, the bit at position $i$ corresponds to the presence of the coalition $S \in \mathcal{K}$ that has the integer representation $i$. For $\K = \{\emptyset, \{1\}, \{1,2\}, \{1,2,3\}\}$, the bit number corresponds to $10000111$, which is $135$ in decimal form.

Figure~\ref{fig:3players-difference} illustrates the results for 3 players and all intersection-closed set systems. Each set system is encoded into an integer, which is assigned to the x-axis, while the y-axis shows the average differences of the values and the standard deviance.

\begin{figure}[ht]
\centering
\includegraphics[width=\linewidth]{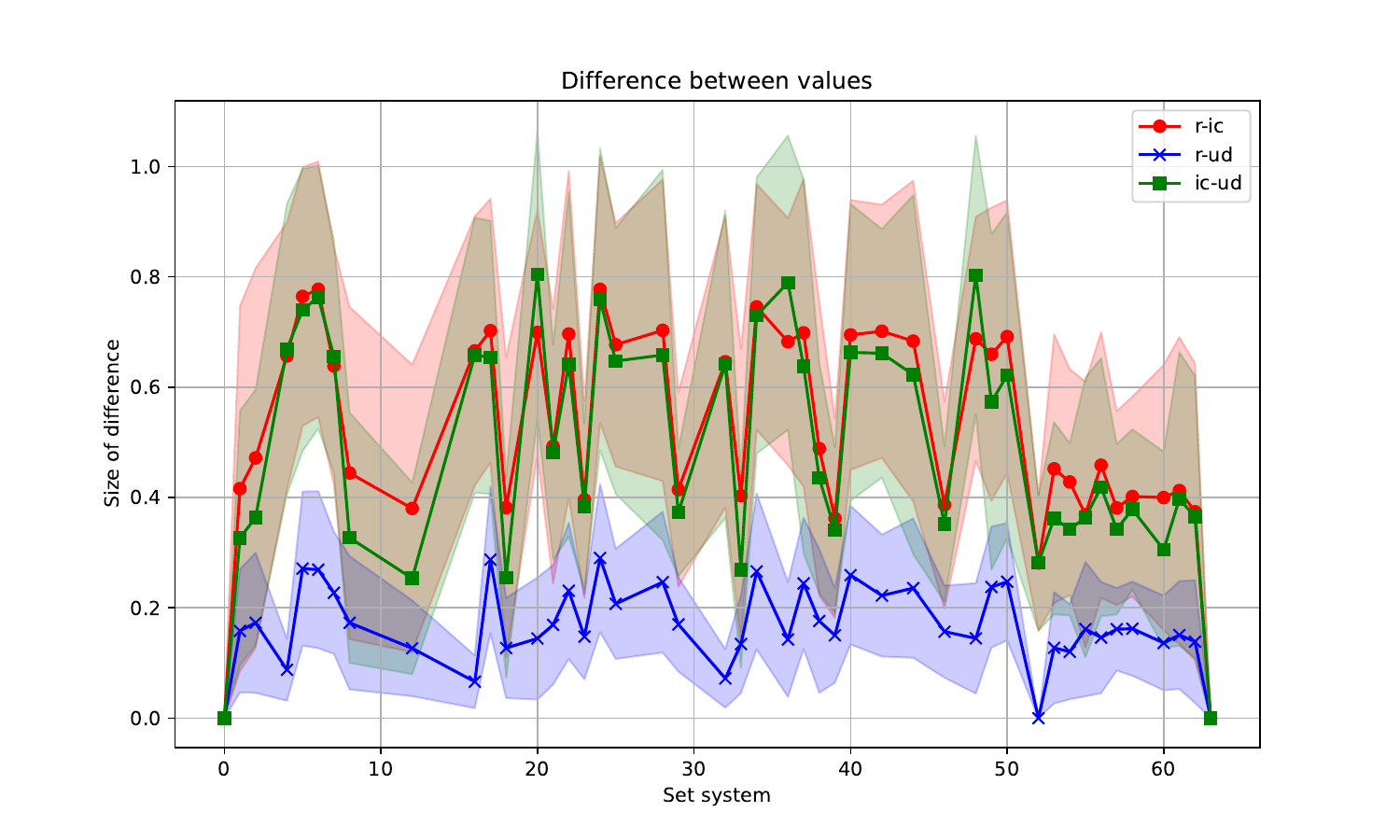}
\caption{Average $\ell_1$-norms of differences between the R-value, UD-value, and IC-value for all intersection-closed set systems with $n = 3$ players. Each set system is represented on the x-axis by its integer encoding, and the y-axis shows the average norm of the differences between the values, computed over 100 randomly generated games games with values selected uniformly from $[0,1]$.}
\label{fig:3players-difference}
\end{figure}

From Figure~\ref{fig:3players-difference}, we observe that the difference between the R-value and the UD-value is, on average, the smallest among the three pairs of values. This trend holds across all intersection-closed set systems when $n = 3$. However, extending this experiment to $n > 3$ becomes considerably more challenging due to the exponential increase in the number of intersection-closed set systems. Even for $n = 4$, there are $2{,}271$ distinct intersection-closed set systems containing both $\emptyset$ and $N$. Although we can still run the experiment in principle, the sheer volume of data makes it difficult to present a visualization analogous to Figure~\ref{fig:3players-difference}. For larger values of $n$, any attempt at exhaustive enumeration quickly becomes infeasible in practice.

To address these issues, we modify our approach in Figure~\ref{fig:diff_order}. Instead of plotting individual average differences for each set system, we focus on the relative ordering of the differences among the three value pairs. For each set system, we compute the average differences $||R(v) - IC(v)||_1$, $||R(v) - UD(v)||_1$, and $||UD(v) - IC(v)||_1$. This average is computed over 100 randomly generated games with values from intervals $\left[0,1\right]$. We then record the order of these differences—identifying which is the smallest, which is the second largest, and which is the largest. After processing all set systems, we aggregate the results and plot the frequencies of each difference being the smallest, second largest, or largest.

For $n = 3$ and $n = 4$, we perform this analysis exhaustively by considering all possible intersection-closed set systems. However, for $n > 4$, due to computational constraints, we sample random intersection-closed set systems and conduct the experiment on this sample.

To determine the size of the representative sample, we use the following statistical method~\cite{Cochran1977}. First, we sample 30 random intersection-closed set systems to estimate the standard deviation of the average differences, denoted by $s$. The required sample size $s_n$ is then computed using the formula
\[
s_n = \left( \frac{Z \cdot s}{\text{E}} \right)^2,
\]
where $Z$ is the z-score corresponding to the desired confidence level, $s$ is the standard deviation estimated from the pilot sample, and $\text{E}$ is the acceptable margin of error. For both $n=5$ and $n=6$, we set $Z=1.96$ (corresponding to 95\% confidence level) and $E = 0.01$ and received $s_n$ between 3,000 -- 4,000 samples.

\begin{figure}[ht!]
    \centering
    \includegraphics[height=0.19\textheight,width=0.49\linewidth]{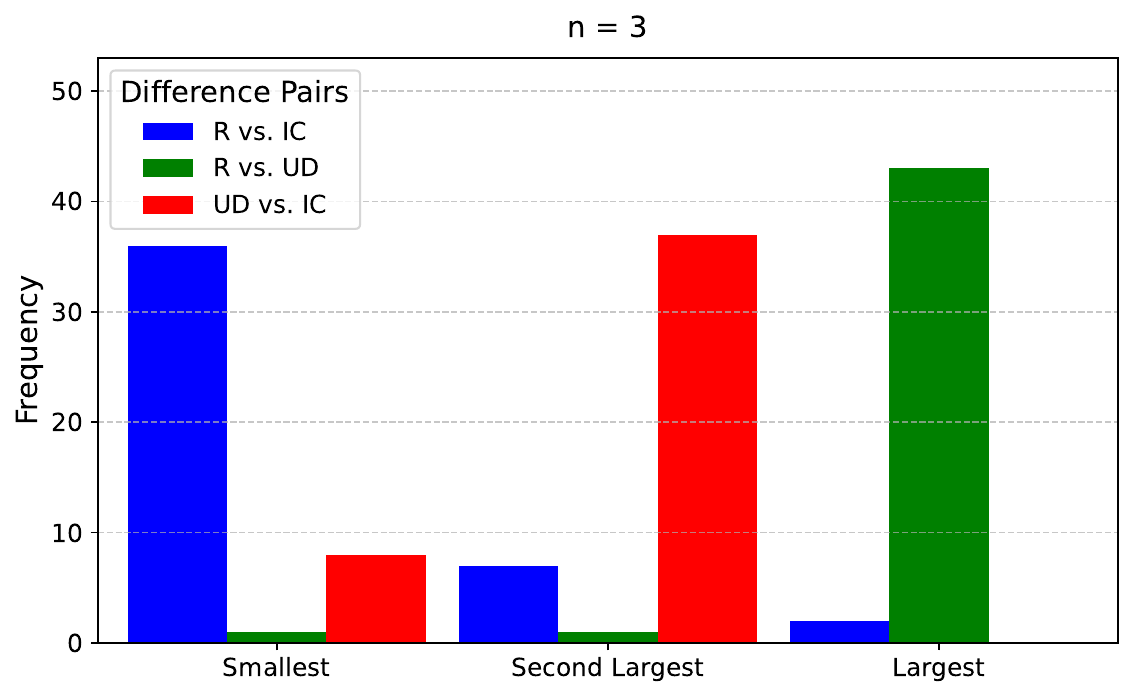}
    \includegraphics[height=0.19\textheight,width=0.49\linewidth]{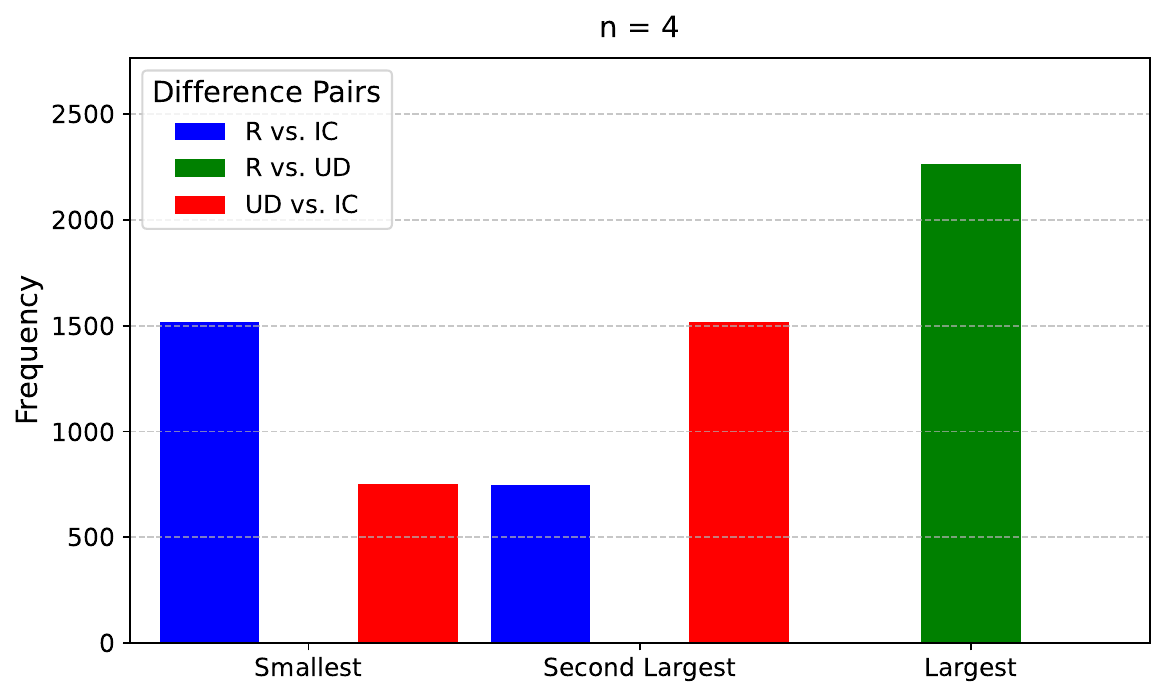}
    \includegraphics[height=0.19\textheight,width=0.49\linewidth]{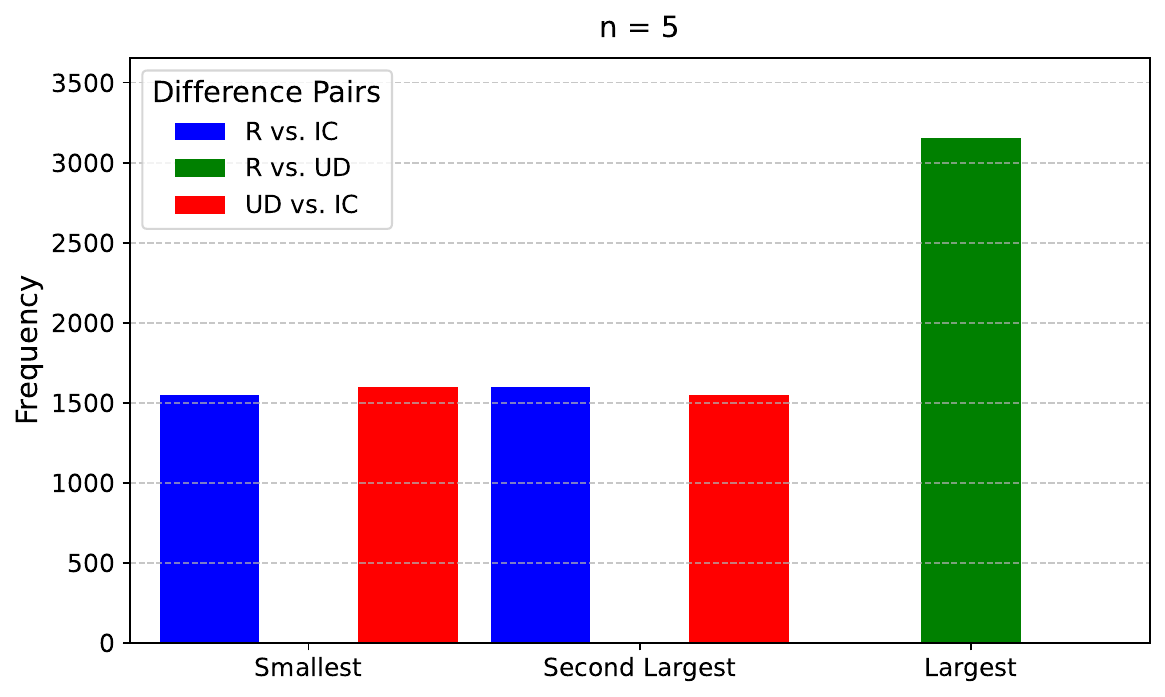}
    \includegraphics[height=0.19\textheight,width=0.49\linewidth]{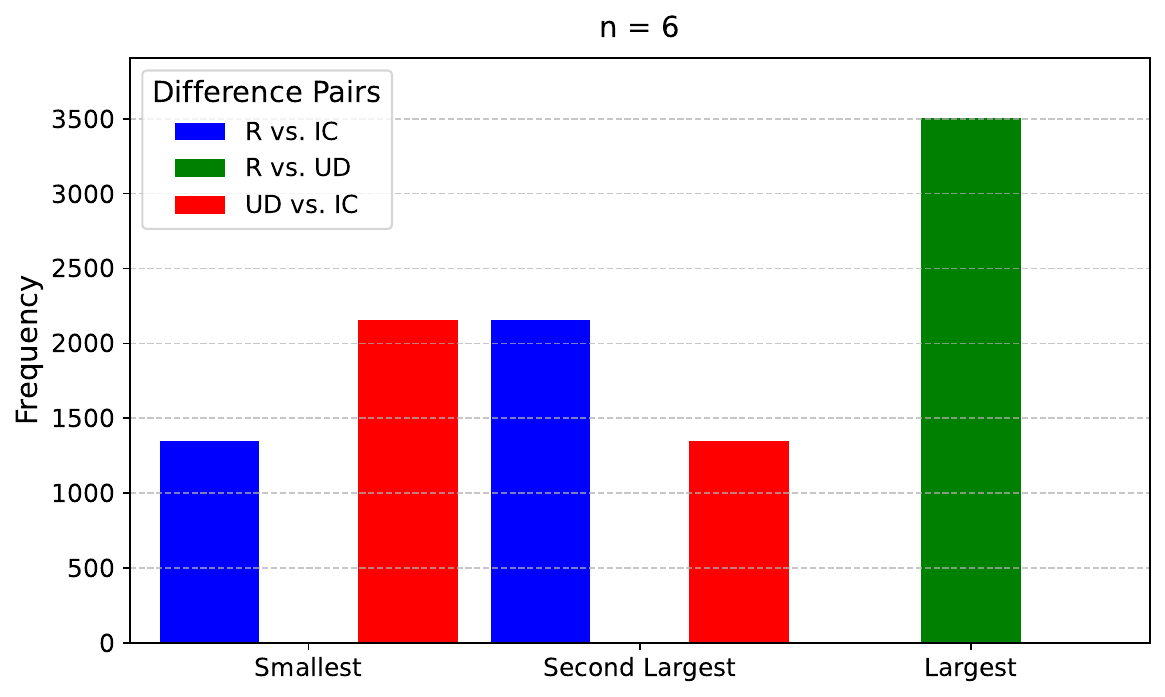}
    \caption{Frequencies of each $\ell_1$-norm difference being the smallest, second largest, or largest for all intersection-closed set systems with $n = 3, 4, 5, 6$ players. The y-axis shows the frequency of each rank, aggregated over sampled or exhaustively evaluated systems, highlighting consistent trends as $n$ increases.}
    \label{fig:diff_order}
\end{figure}

\begin{figure}[ht!]
\centering
\includegraphics[width=0.49\linewidth]{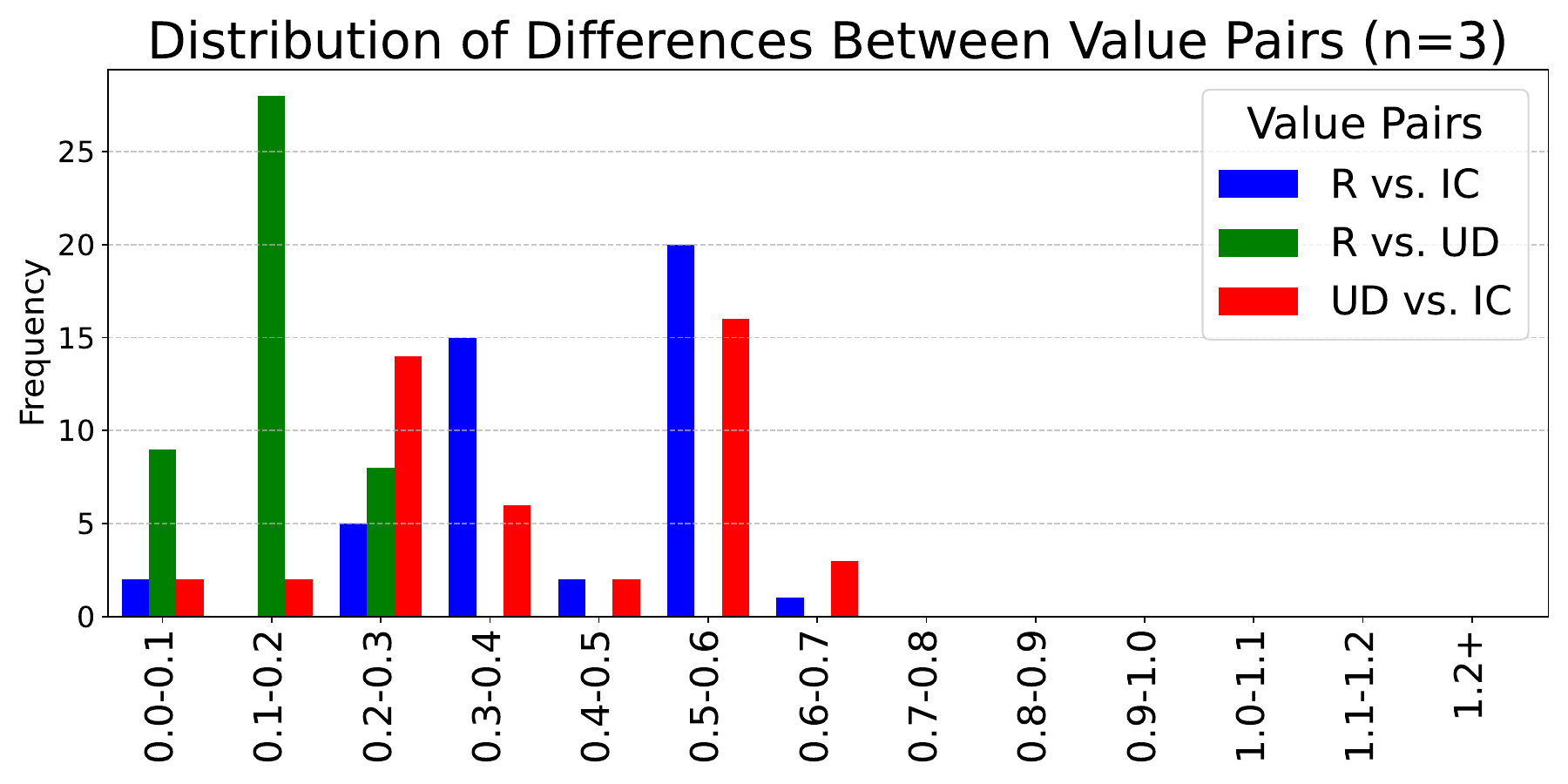}
\includegraphics[width=0.49\linewidth]{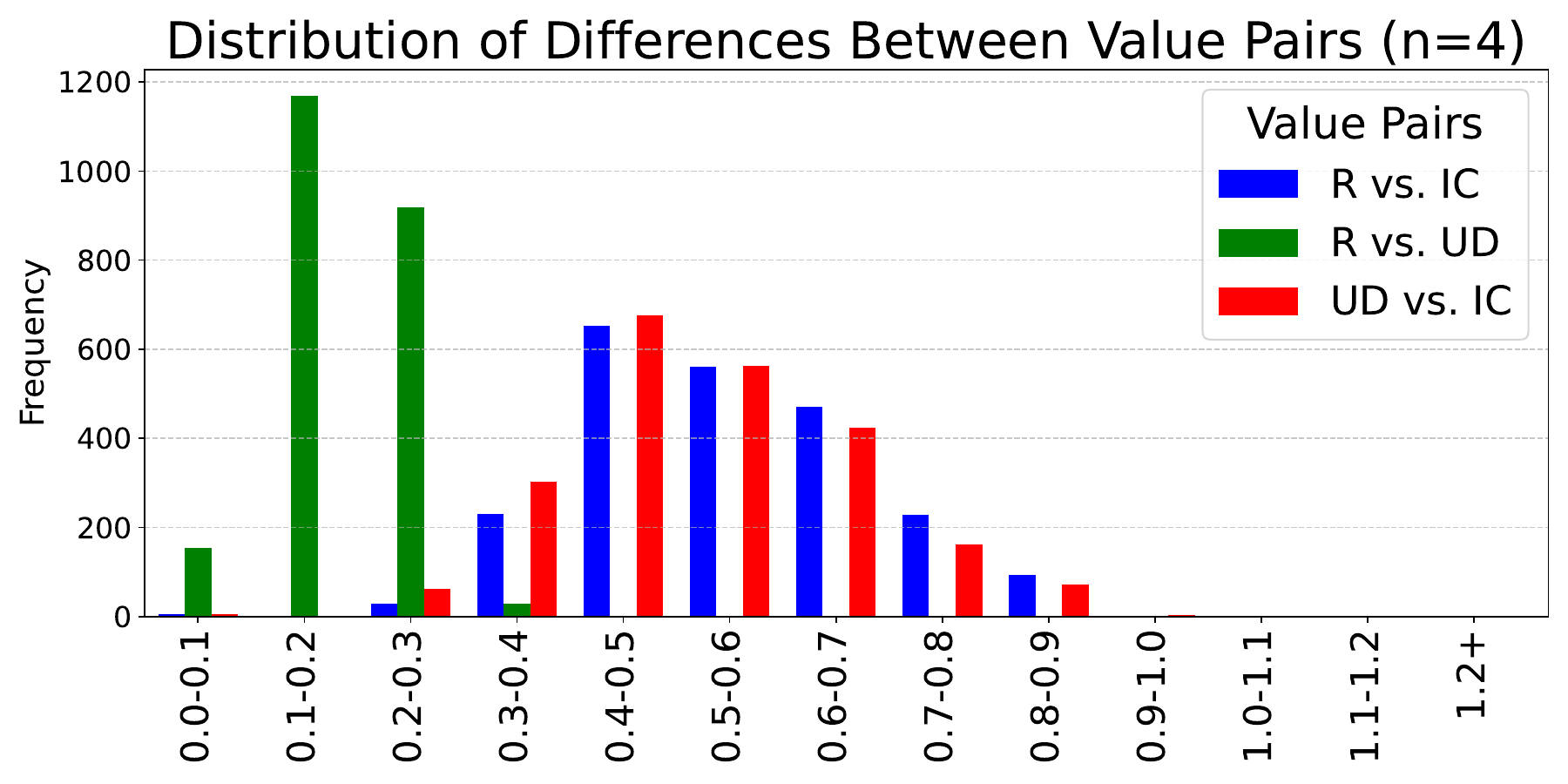}
\includegraphics[width=0.49\linewidth]{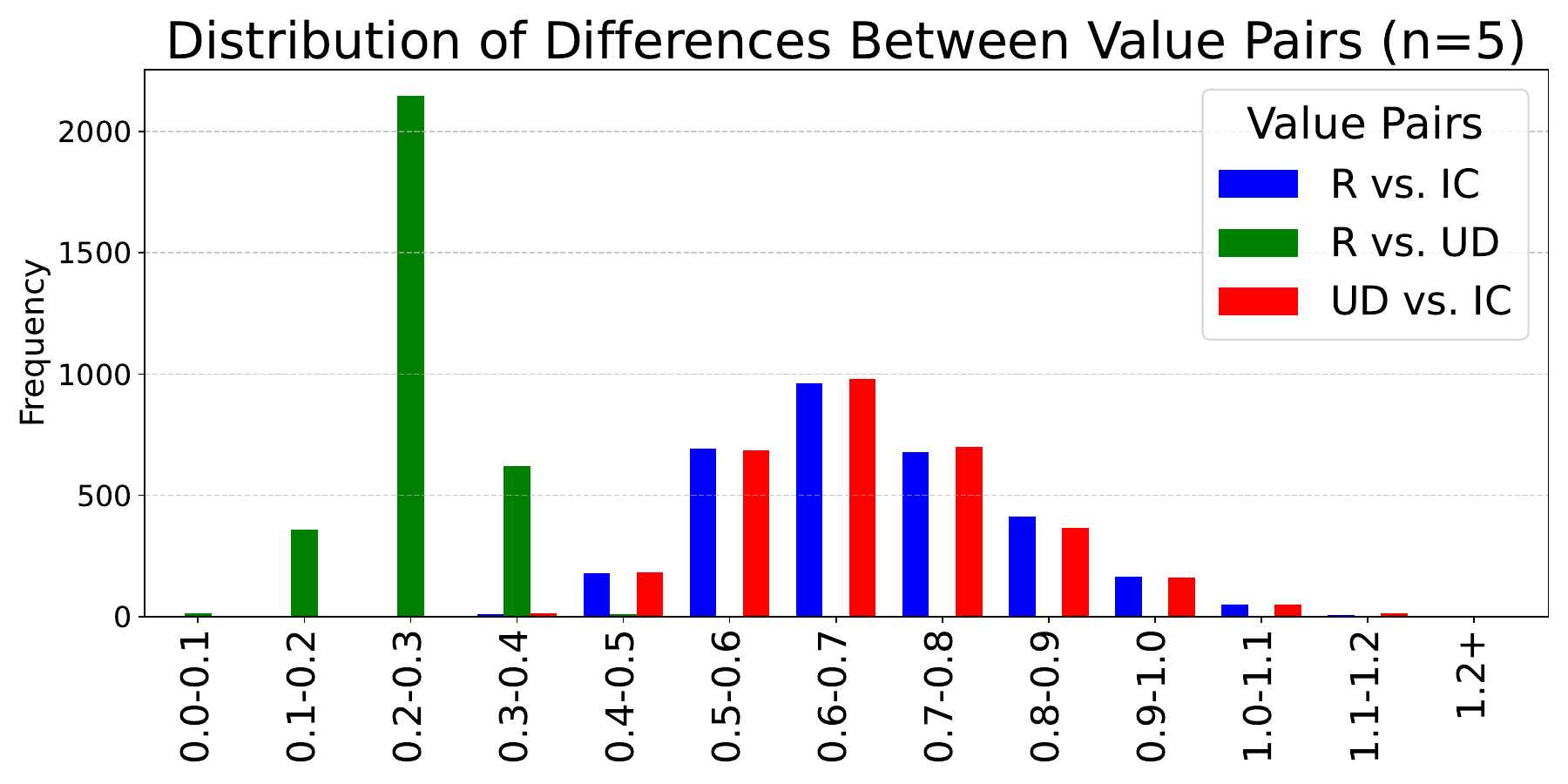}
\includegraphics[width=0.49\linewidth]{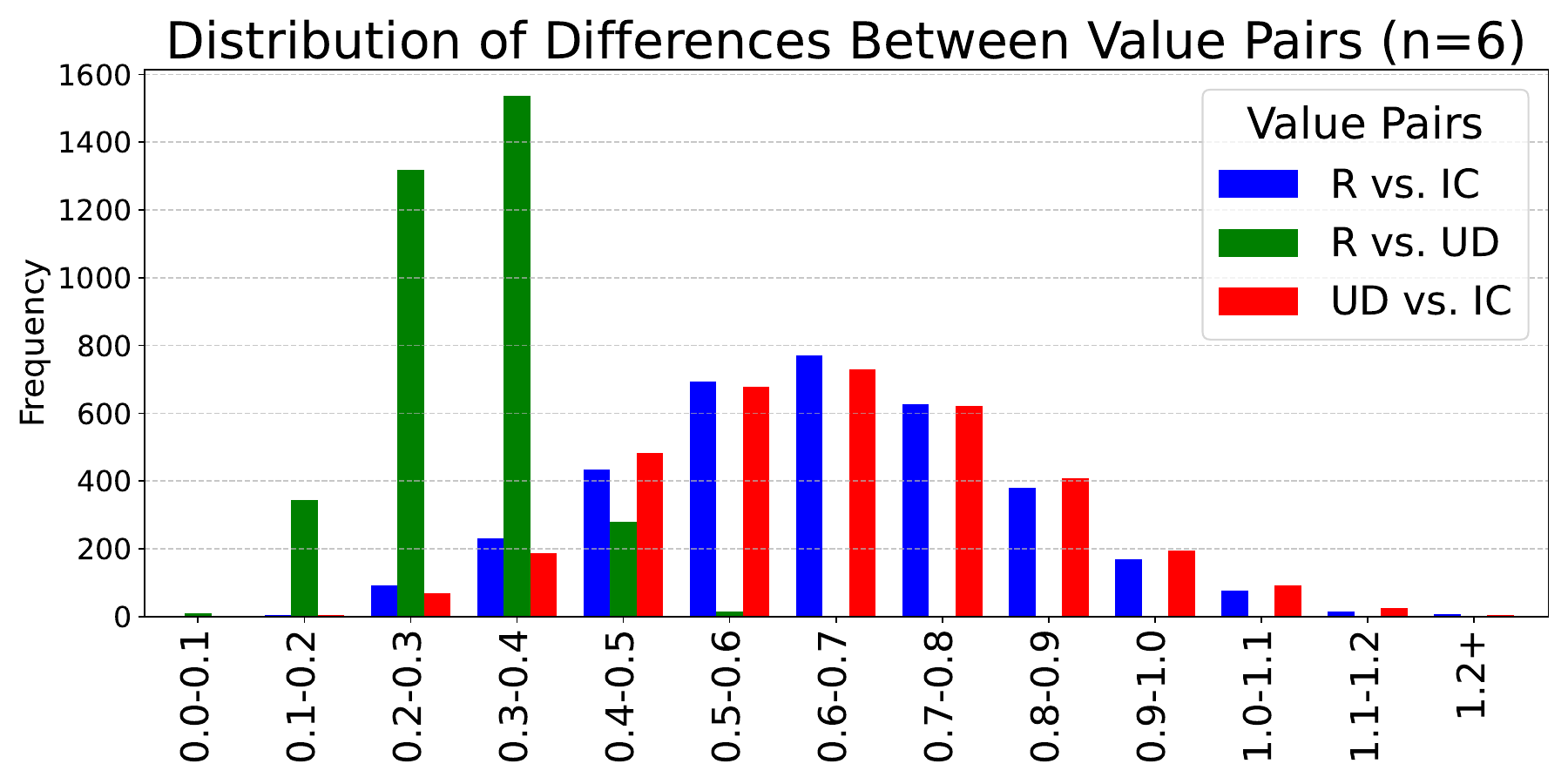}
\caption{Histograms of the average differences between value pairs for $n = 3, 4, 5, 6$. Each histogram shows the frequency distribution of average differences divided into intervals of width 0.1 from 0 to 1.2.}
\label{fig:diff_range}
\end{figure}

In Figure~\ref{fig:diff_order}, we observe that the behavior is similar for all values of $n$. Specifically, the difference between the R-value and the UD-value consistently remains the smallest, while the difference between the R-value and the IC-value is the largest. As $n$ increases, the frequency of the R-value and IC-value difference being the largest decreases, whereas the frequency of the UD-value and IC-value difference being the largest increases. Additionally, from Figure~\ref{fig:3players-difference} for $n = 3$, we notice that these differences follow very similar distributions in terms of average value and variance. Therefore, Figure~\ref{fig:diff_order} may not be particularly descriptive when comparing these two differences.

To compare these two differences more quantitatively, we provide a third plot, Figure~\ref{fig:diff_range}. This figure represents the frequencies of average differences divided into categories ranging from 0 to 1.2, with each category having a width of 0.1. Interestingly, these distributions resemble a normal distribution.

\begin{figure}[ht!]
    \centering
    \includegraphics[height=0.35\textheight,width=\linewidth]{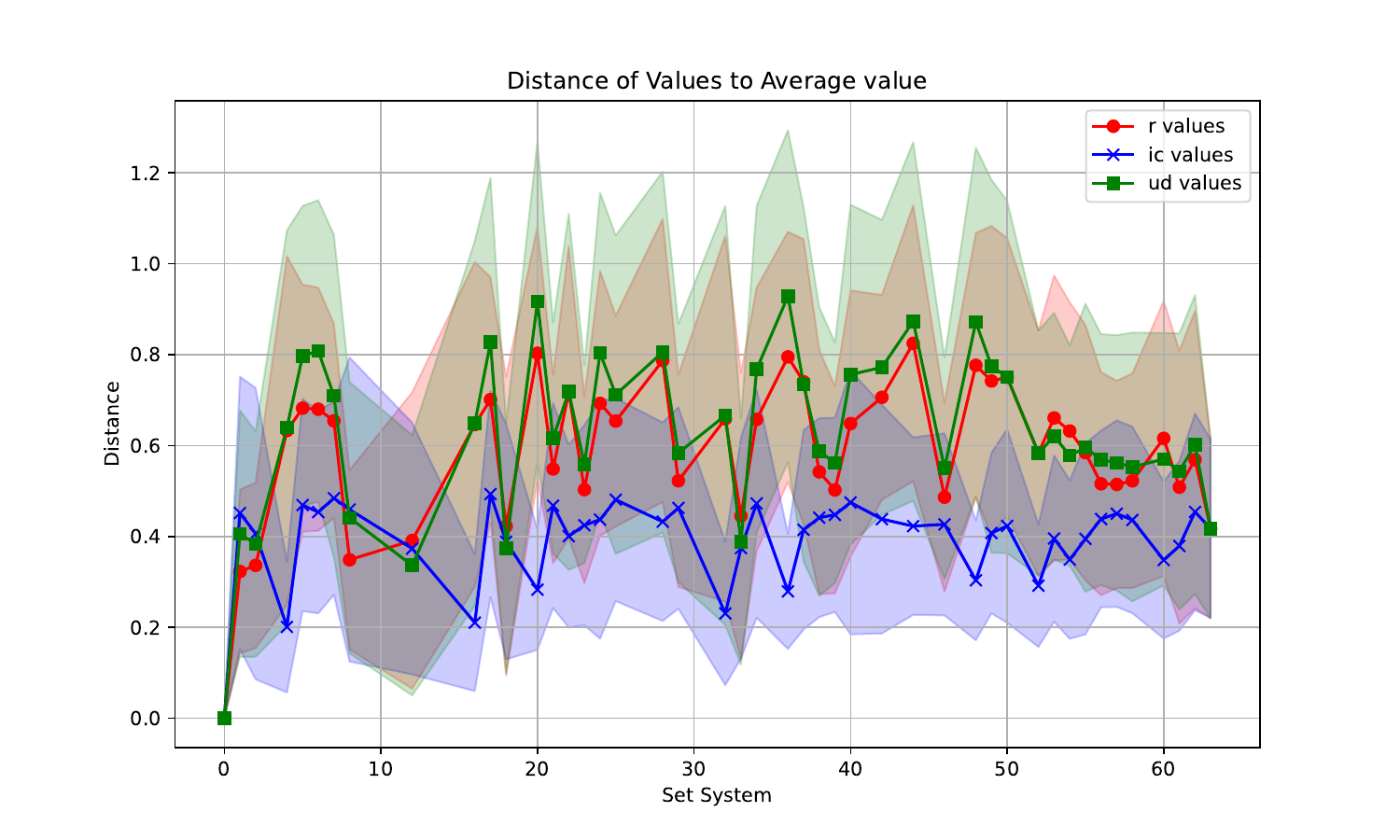}
    \caption{Average $\ell_1$-norms of distances between the R-value, UD-value, IC-value and the Equal Division Rule for 3 players. Each set system is represented on the x-axis by its integer encoding, while the y-axis shows the average distance between the values and ED, computed over 100 randomly generated games with values selected uniformly from $[0,1]$.}
    \label{fig:dist_3}
\end{figure}

\subsubsection{Distance from the Equal Division Rule}
In this experiment, we examine how evenly $v(N)$ is distributed among the players the different values. Formally, we approach this experiment similarly to the previous one, computing the differences between the three values and the so-called \emph{equal division rule} (ED), defined as $ED_i(v) = \frac{v(N)}{n}$ for every $i \in N$.

Figure~\ref{fig:dist_3} illustrates the case for three players. Notably, when we look at incomplete games, the \emph{UD-value} shows the largest deviation from the ED rule. Moreover, the distances of the \emph{R-value} and the \emph{UD-value} from the ED rule appear similarly distributed across the set systems, which is expected given how close these two values often are to each other.

For larger $n$, we conducted a similar experiment to that in Figure~\ref{fig:diff_order}, counting how frequently each value (averaged over all players) is (1) the closest, (2) the second closest, or (3) the furthest from the ED rule. Across $n = 3, 4, 5, 6$, we observe consistent behavior: most of the time, the \emph{IC-value} is the closest to ED, and the \emph{UD-value} is the furthest. These findings are presented in Figure~\ref{fig:diff_order}.

However, a qualitative experiment, similar to the one shown in Figure~\ref{fig:dist_order}, reveals that the results from Figure~\ref{fig:diff_order} can be misleading . Although the average distances of the R-value and the UD-value differ, their distributions exhibit similar patterns, again resembling a normal distribution.

\begin{figure}[ht!]
    \centering
    \includegraphics[height=0.19\textheight,width=0.49\linewidth]{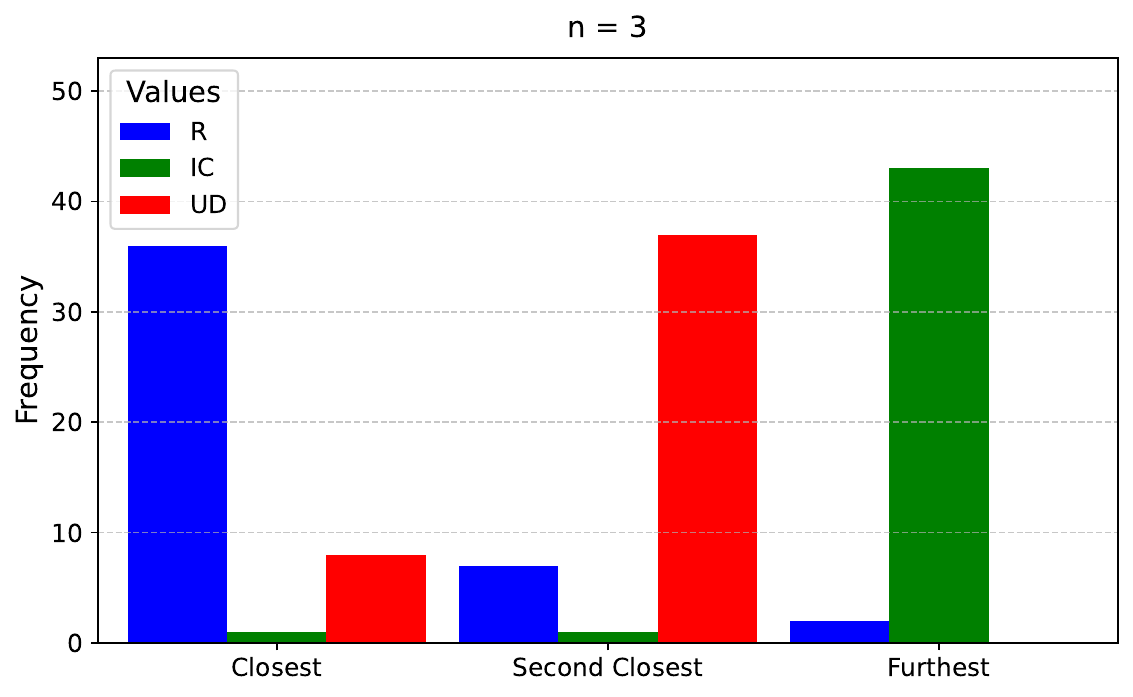}
    \includegraphics[height=0.19\textheight,width=0.49\linewidth]{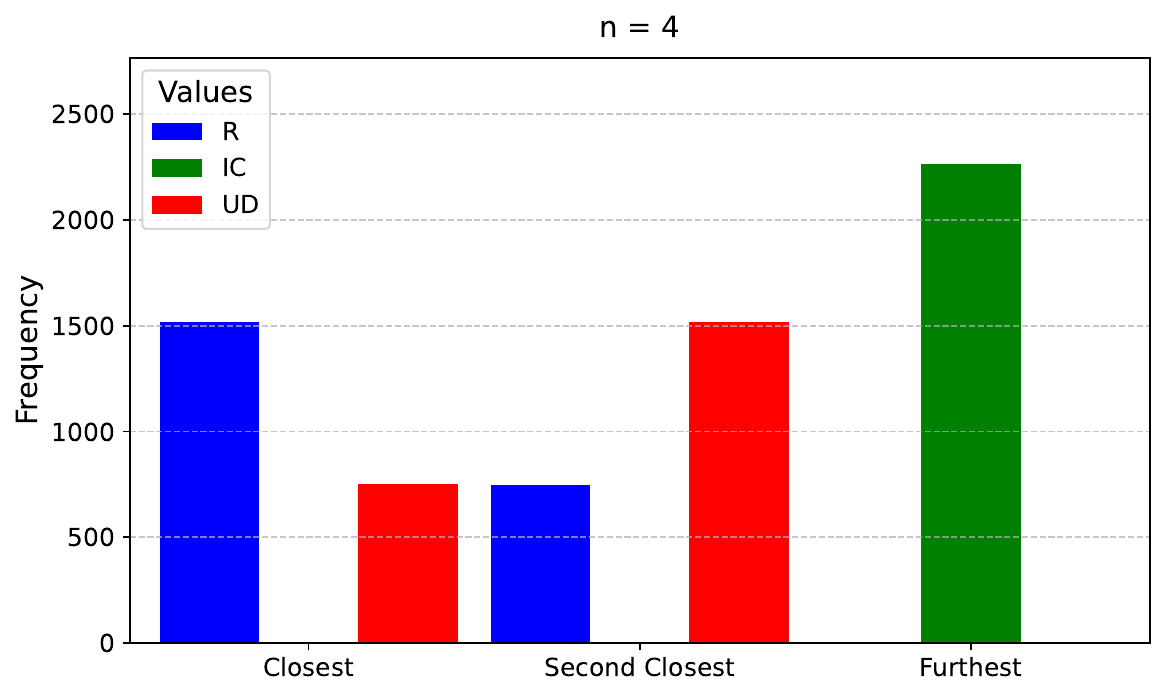}
    \includegraphics[height=0.19\textheight,width=0.49\linewidth]{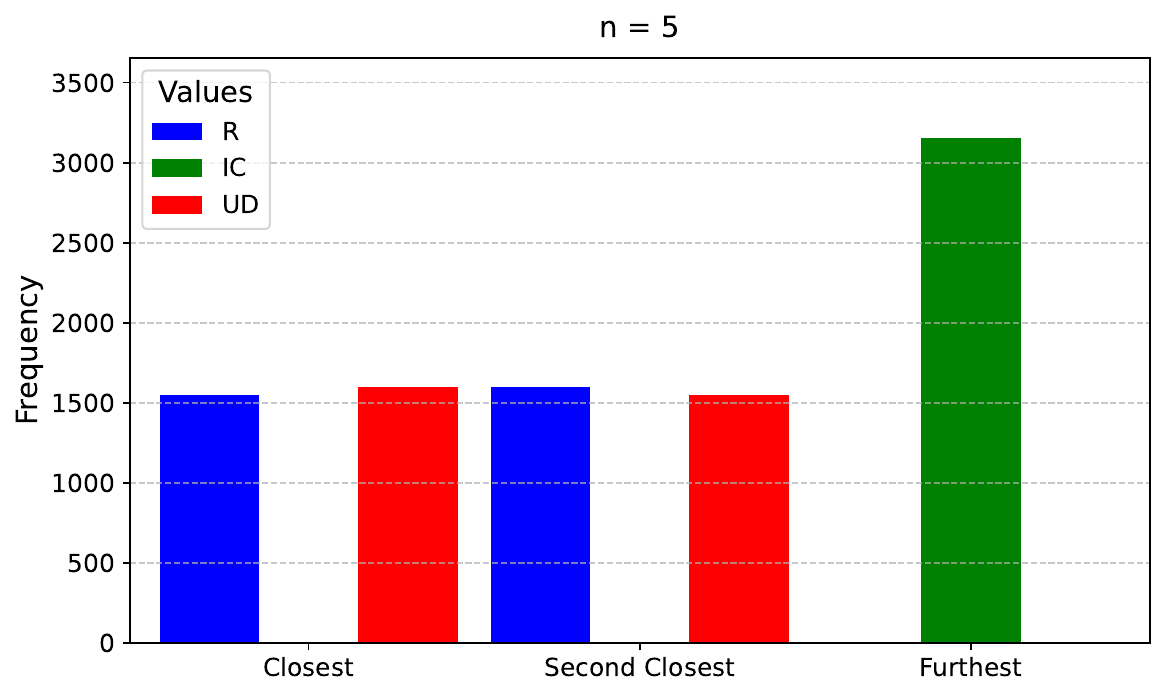}
    \includegraphics[height=0.19\textheight,width=0.49\linewidth]{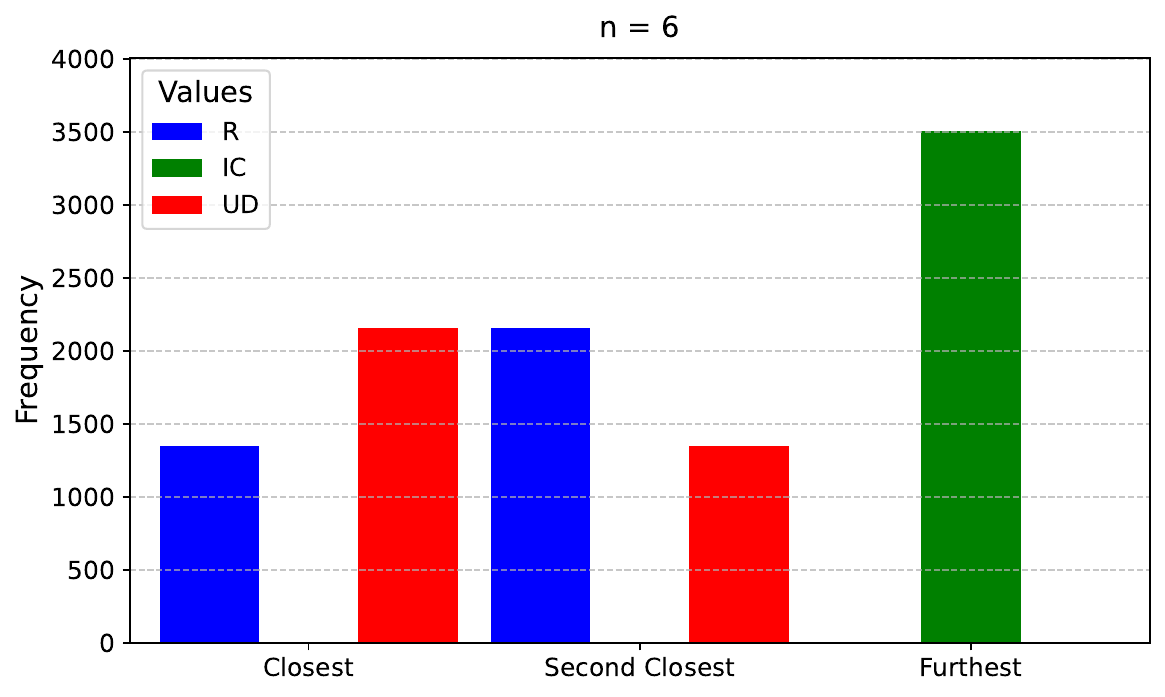}
    \caption{Frequencies of each value’s distance from the ED rule being the closest, second closest, and largest for $n = 3, 4, 5, 6$. The y-axis shows the frequency of each rank, aggregated over sampled or exhaustively evaluated systems, higlighting consistent trends as $n$ increases.}
    \label{fig:dist_order}
\end{figure}

\begin{figure}[ht!]
    \centering
    \includegraphics[width=0.48\linewidth]{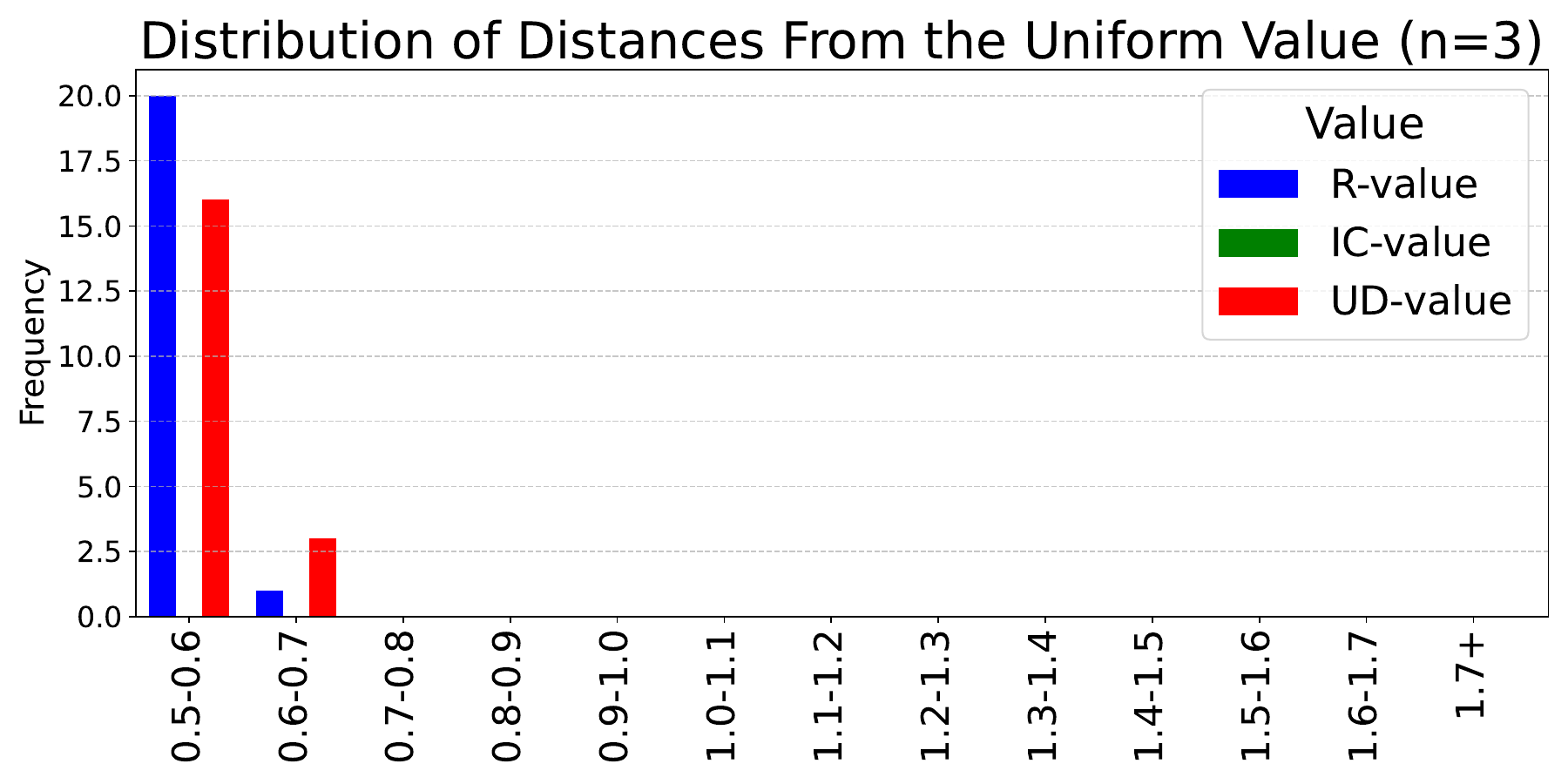}
    \includegraphics[width=0.48\linewidth]{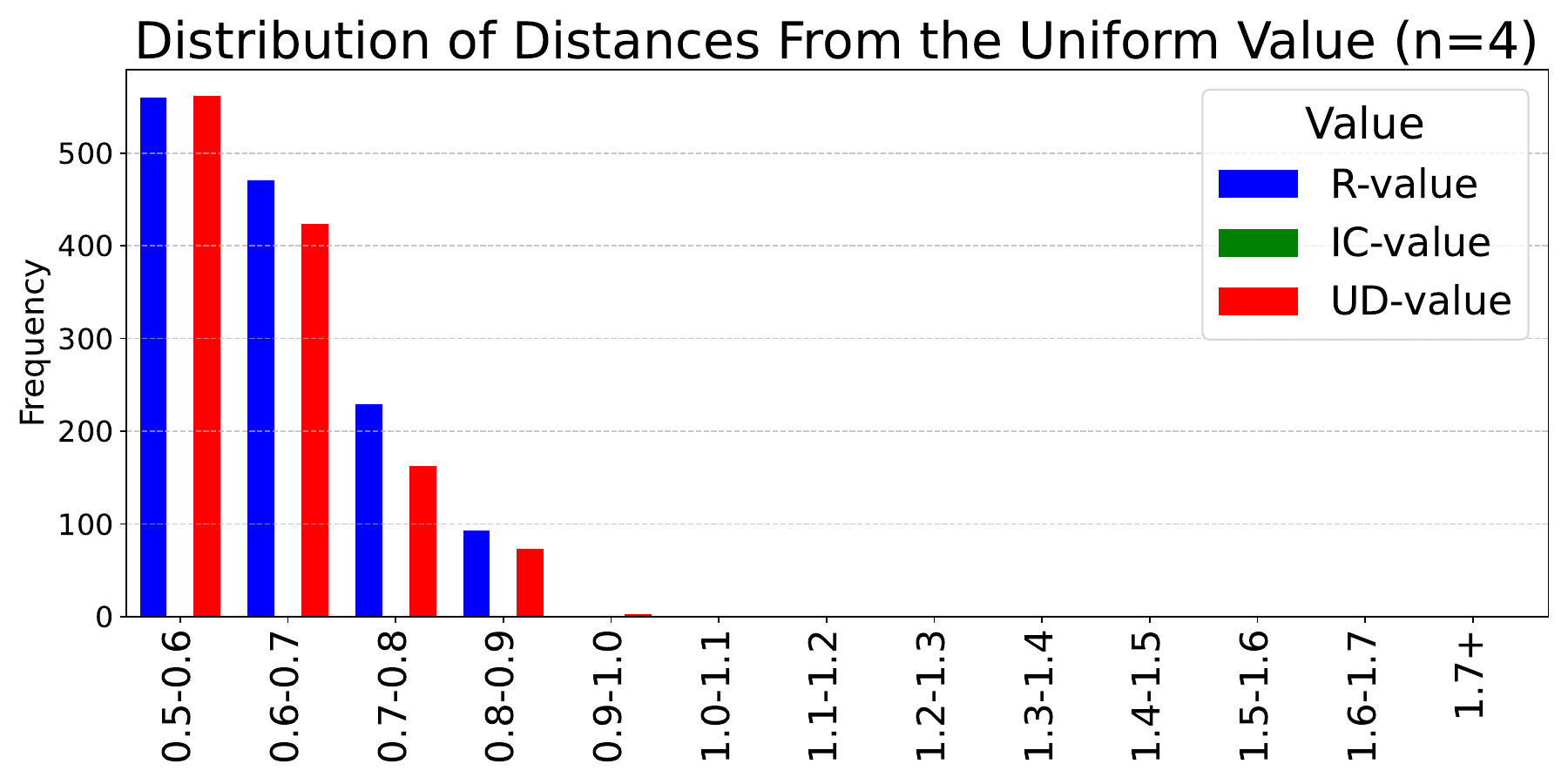}
    \includegraphics[width=0.48\linewidth]{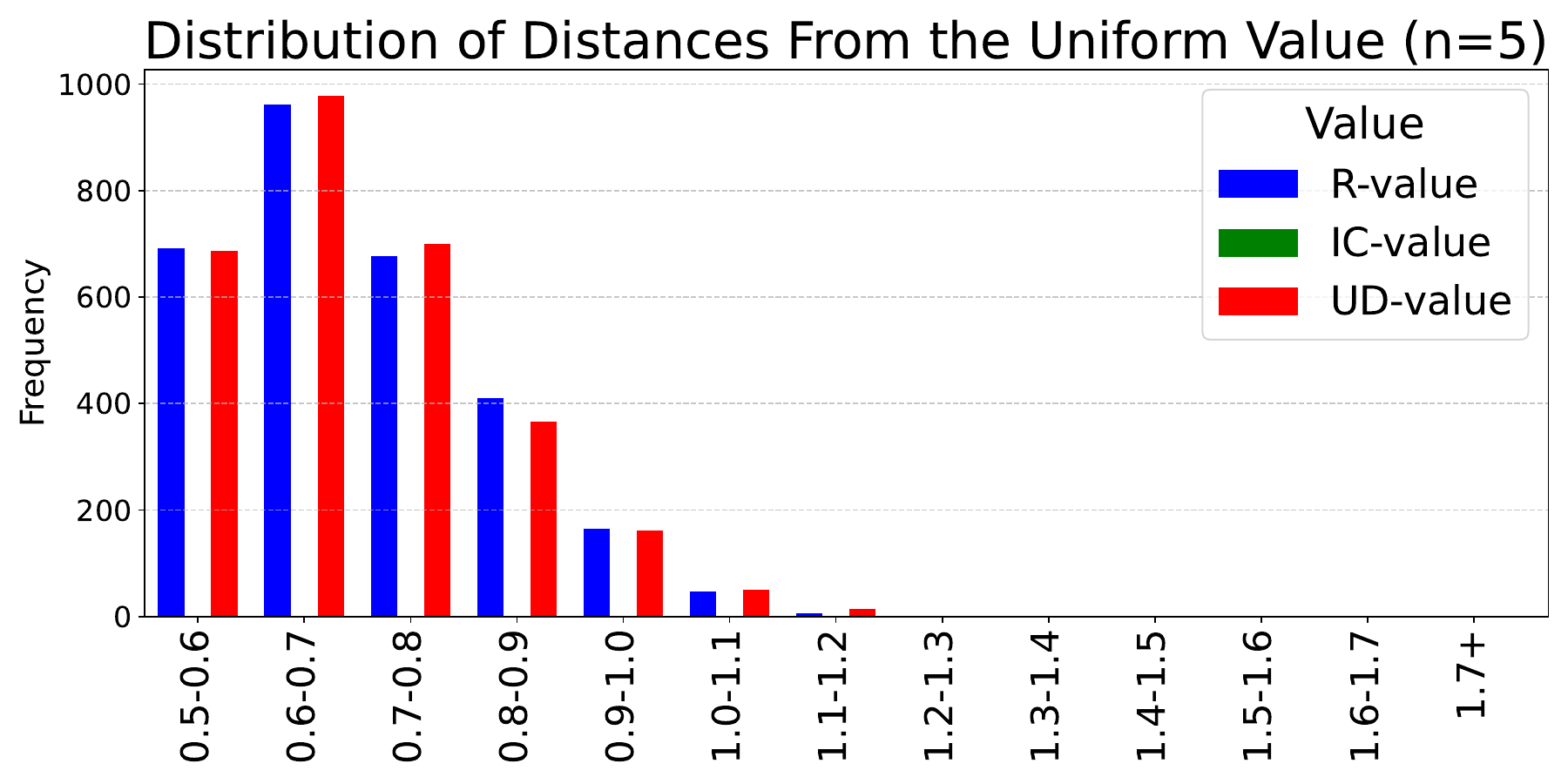}
    \includegraphics[width=0.48\linewidth]{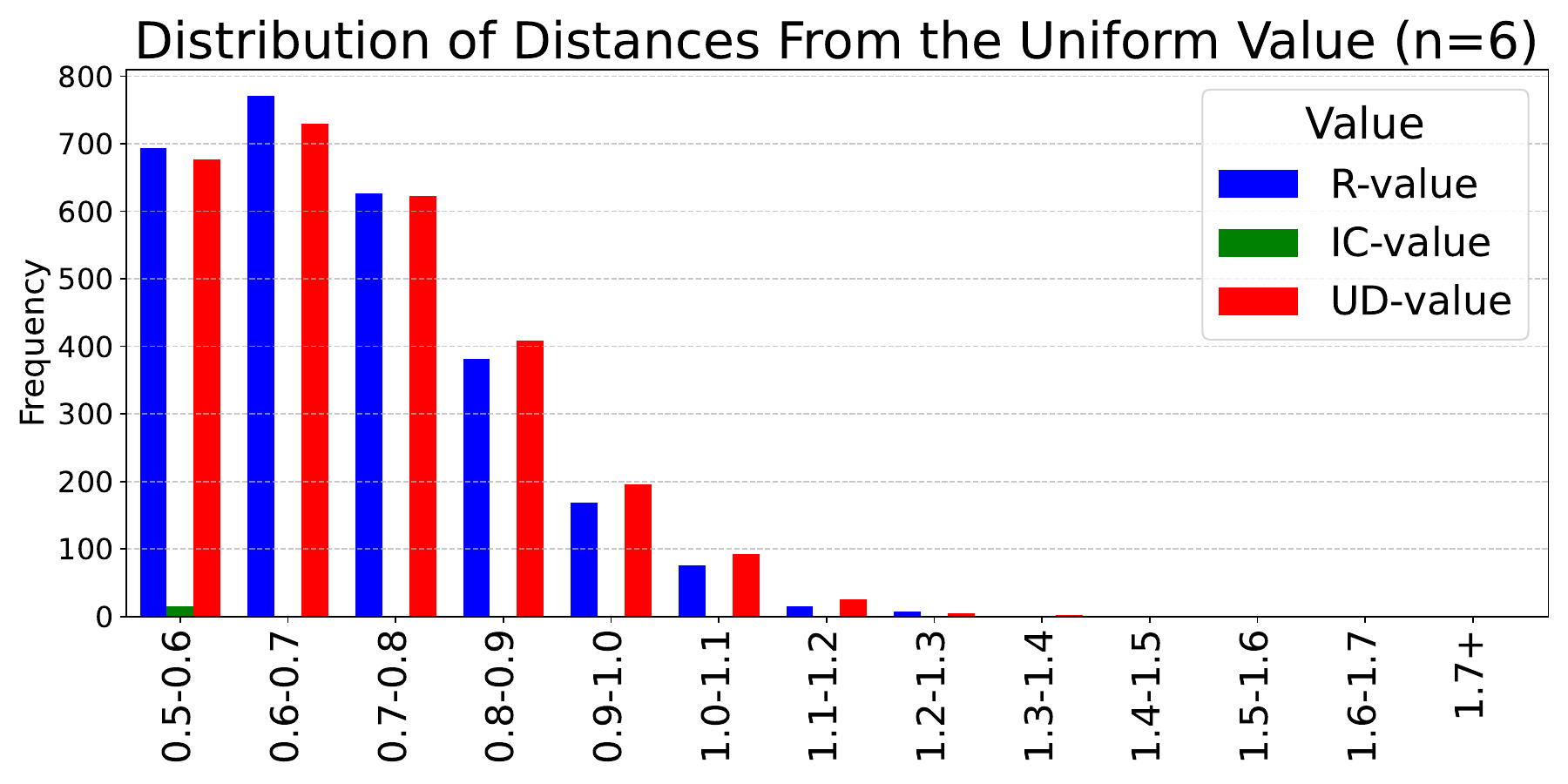}
    \caption{Histograms of the average distances from the ED rule for $n = 3, 4, 5, 6$. Each histogram shows the distribution of these distances (ranging from 0.5 to 1.7) in bins of width 0.1. The charts reveal how frequently the R-value, UD-value, and IC-value differ from an equal split, across various intersection-closed set systems.}
    \label{fig:dist_range}
\end{figure}

\section{Conclusions}
We introduced the \emph{uniform-dividend} (UD) value for incomplete cooperative games and proved that it is uniquely determined when the set of known coalitions is intersection-closed. Our comparisons with the R-value and the IC-value highlight how the UD-value complements these existing allocation rules, offering an alternative in intersection-closed system. Numerical experiments further indicate that the UD-value and the R-value consistently lie closer to each other than either does to the IC-value.

A key strength of the UD-value is its natural interpretation as the \emph{expected Shapley value over all positive extensions} of the incomplete game. In situations where only some coalitions’ values are known, this viewpoint shows that the UD-value fairly reflects each agent’s average contribution across all admissible extensions of the game. 

As the number of players grows, the proportion of intersection-closed set systems becomes vanishingly small—yet, interestingly, our experiments show that the share of systems for which the UD-value is unique keeps growing, ultimately forming the majority among all set systems. Investigating this phenomenon in a broader scope to understand exactly when such uniqueness arises is a direction for future work.

\section*{Acknowledgements}
The project was supported by the Czech Science Foundation grant no. 25-15714S. The author was further supported by the Charles University Grant Agency (GAUK 206523) and the Charles University project UNCE 24/SCI/008.

%% The Appendices part is started with the command \appendix;
%% appendix sections are then done as normal sections
\appendix
% \section{Additional experimental results}
% \label{app1}

% Here, we present additional experimental results, which were not supplied in the main text.

%% For citations use: 
%%       \cite{<label>} ==> [1]

%%
%Example citation, See \cite{lamport94}.

%% If you have bib database file and want bibtex to generate the
%% bibitems, please use
%%
  % \bibliographystyle{elsarticle-num} 
  % \bibliography{bib}

%% else use the following coding to input the bibitems directly in the
%% TeX file.

%% Refer following link for more details about bibliography and citations.
%% https://en.wikibooks.org/wiki/LaTeX/Bibliography_Management

\end{document}